\documentclass[11pt]{article}

\usepackage[english]{babel}
\usepackage[utf8x]{inputenc}

\usepackage{booktabs}
\usepackage{tabu}
\usepackage[T1]{fontenc}

\usepackage[a4paper,top=2.5cm,bottom=2.5cm,left=2.5cm,right=2.5cm,marginparwidth=1.75cm]{geometry}

\usepackage{sansmathfonts}
\usepackage[sfdefault]{biolinum}
\usepackage[T1]{fontenc}
\usepackage[usenames,dvipsnames,svgnames]{xcolor}

\usepackage{amsmath}
\usepackage{amsthm}
\usepackage{amssymb}
\usepackage{graphicx}
\usepackage{amsthm}
\usepackage[noadjust]{cite}
\usepackage{mathtools}
\usepackage{graphicx}
\usepackage{array}
\usepackage{xspace}
\usepackage{float}
\usepackage{wrapfig}
\usepackage{thmtools, thm-restate}
\usepackage{enumitem}
\usepackage{subcaption}

\usepackage{soul}
\soulregister\cite7
\soulregister\ref7
\soulregister\cite7
\soulregister\cref7
\colorlet{soulblue}{cyan!30}
\sethlcolor{soulblue}

\usepackage[boxruled, nofillcomment, lined]{algorithm2e}

\RequirePackage{hyperref}
\hypersetup{colorlinks=true, urlcolor=Blue, citecolor=Green, linkcolor=BrickRed, breaklinks, unicode}

\usepackage[colorinlistoftodos]{todonotes}
\usepackage[capitalise, noabbrev]{cleveref}

\theoremstyle{plain}
\newtheorem{theorem}{Theorem}
\newtheorem{lemma}{Lemma}

\newtheorem{claim}{Claim}

\theoremstyle{definition}
\newtheorem{definition}{Definition}

\newtheorem{observation}{Observation}

\theoremstyle{remark}

\newcommand{\CONGEST}{\mathsf{CONGEST}}
\newcommand{\Sin}{S_{in}}
\newcommand{\Sout}{S_{out}}
\newcommand{\wF}{w_{\!_F}}
\newcommand{\wV}{w_{\!_V}}

\newcommand{\wA}{w_{\!_A}}

\newcommand{\depth}{\mathsf{depth}}
\newcommand{\parent}{\mathsf{parent}}

\newcommand{\Poly}{\text{poly}}

\title{A Simple Distributed Deterministic Planar Separator}

\author{%
 Yaseen Abd-Elhaleem\thanks{{\em Department of Computer Science, University of Haifa. Supported in part by the Israel Science Foundation grant No. 810/21 and No. 2829/25.}}\\
 \texttt{\href{mailto:yaseenuniacc@gmail.com}{yaseenuniacc@gmail.com}}\\
 \and Michal Dory\thanks{{\em Department of Computer Science, University of Haifa. Supported in part by the Israel Science Foundation grant No. 2829/25.}}\\
 \texttt{\href{mailto:mdory@ds.haifa.ac.il}{mdory@ds.haifa.ac.il}}\\
 \and Oren Weimann\thanks{{\em Department of Computer Science, University of Haifa. Supported in part by the Israel Science Foundation grant No. 810/21.}}\\
 \texttt{\href{mailto:oren@cs.haifa.ac.il}{oren@cs.haifa.ac.il}}
}

\date{}

\begin{document}
\maketitle

\begin{abstract}
A balanced separator of a graph $G$ is a set of vertices whose removal disconnects the graph into connected components that are a constant factor smaller than $G$.
Lipton and Tarjan [FOCS'77] famously proved that every planar graph admits a balanced separator of size $O(\sqrt{n})$, as well as a balanced separator of size $O(D)$ that is a simple path (where $D$ is the graph's diameter).
In the centralized setting,  these separators can both be found in linear $O(n)$ time. 
In the distributed setting, since the diameter $D$ is a trivial universal lower bound for the number of rounds required to solve many  optimization problems, separators of size $O(D)$ are preferable over those of size  $O(\sqrt{n})$. 
    
It was not until [Ghaffari, Parter DISC'17] that an algorithm was devised to compute such an $O(D)$-size separator distributively in $\tilde O(D)$ rounds\footnote{The $\tilde{O}(\cdot)$ notation is used to omit $\Poly \log n$ factors.}, by adapting the Lipton-Tarjan algorithm to the distributed model. Since then, this algorithm was used in several distributed algorithms for planar graphs, e.g., [Ghaffari, Parter DISC'17], [Li, Parter STOC'19], [Abd-Elhaleem, Dory, Parter and Weimann PODC'25]. However, the algorithm is randomized, deeming the algorithms that use it to be randomized as well. Obtaining a deterministic algorithm remained an interesting open question until very recently, when a (complex) deterministic separator algorithm was given by   [Jauregui, Montealegre and Rapaport  PODC'25].

In this paper, we present a much simpler deterministic separator algorithm with the same (near-optimal)  $\tilde O(D)$-round complexity. 
While previous works devised either complicated or randomized ways of transferring weights from vertices of $G$ to faces of $G$, we show that a straightforward way also works: Each vertex simply transfers its weight to one arbitrary face it belongs to. That's it!
    
We note that a deterministic separator algorithm directly derandomizes the state-of-the-art distributed algorithms for classical problems on planar graphs such as single-source shortest-paths, maximum-flow, directed global min-cut, and reachability. 
\end{abstract}

\thispagestyle{empty}
\setcounter{page}{0}

\newpage

\section{Introduction}
A $c$-balanced separator of a vertex-weighted graph $G$ is a set of vertices whose removal disconnects the graph into connected components, each weighing at most a $c$ fraction of $G$'s total weight.
In their classical paper from the 70's~\cite{LiptonT80_sep}, Lipton and Tarjan showed that every planar graph admits a balanced separator of size $O(\sqrt{n})$. This has been used in numerous divide-and-conquer centralized algorithms. Their proof first shows that given any spanning tree $T$ of $G$, there is a balanced separator of $G$ consisting of a path in $T$. If $G$ is {\em triangulated}, then this separator is a {\em fundamental cycle} (a cycle consisting of the path in $T$ plus an edge $e$ of $G$ connecting the path's endpoints). 
Taking $T$ to be a BFS tree produces a separator of size $O(D)$, where $D$ is the graph's hop-diameter. 
 In the distributed setting, since $D$ is a trivial lower bound for the number of rounds required to solve many  optimization problems, separators of size $O(D)$ are preferable over those of size $O(\sqrt{n})$. 
 
Lipton and Tarjan reduce the problem of computing a balanced separator in $G$ (using a spanning tree $T$) to computing a {\em balanced cut} in the {\em dual tree} $T^*$. It 
relies on the cut-cycle duality in planar graphs (see \cref{sec: preliminaries} for definitions of duality), and on the fact that one can transfer the weights from vertices of $G$ to faces of $G$ (i.e., to nodes of the dual graph $G^*$). 
In order for $T^*$ to have a balanced cut, it is crucial that $T^*$ has a constant degree. In the centralized setting, this can easily be achieved by triangulating $G$, i.e., adding artificial edges to $G$ to make every face of size 3 (thus making every node in $T^*$ to be of degree at most 3).
In the distributed setting, however, one cannot afford to triangulate $G$ (as artificial edges cannot be communicated on). Instead, Ghaffari and Parter~\cite{GhaffariP17_dfs} gave a distributed implementation of Lipton-Tarjan that circumvents this using a randomized procedure that approximates the face-weights assigned by Lipton-Tarjan.
Since then, their separator algorithm has been used in several state-of-the-art distributed algorithms for planar graphs, including depth-first search, single-source shortest-paths, maximum flow, routing, and diameter~\cite{AbdElhaleemDPW25_flow,LiP19_diameter,GhaffariP17_dfs,routing}.
However, since the distributed separator algorithm was randomized, all algorithms that used it were randomized as well. In fact, in most of them, finding the separator is the only randomized component. 
Very recently in PODC 2025, using a different approach, Jauregui, Montealegre, and Rapaport~\cite{deterministic_sep} provided a {\em deterministic} distributed separator algorithm. 
Their algorithm avoids computations on the dual graph by instead considering a collection of triangulation edges. This, however leads to a complex analysis, as we discuss in \cref{sec:overview}.

\medskip
\noindent
{\bf Our result.} 
The standard way to compute a balanced separator in the primal graph (with respect to vertex-weights) is to translate it to the problem of computing a balanced cut (a cut whose sides are roughly of the same weight) in the dual graph. To do so, one needs to assign weights to the dual nodes (faces of the primal graph) such that a balanced cut in the dual graph translates to a balanced separator in the primal graph.
A natural way to transfer weights from vertices to faces is to set  a face weight to be the sum of weights of its vertices. This however does not work, since a vertex can belong to an arbitrary number of faces, which leads to overcounting.
This issue was the source of~\cite{GhaffariP17_dfs} resorting to randomization and of~\cite{deterministic_sep} being complex. We give a very simple and natural alternative: Each vertex transfers its weight to one arbitrary face it belongs to. That's it! Since each vertex transfers its weight to one face it is counted exactly once which overcomes the problem of overcounting, and we show that a balanced cut with respect to this weight assignment translates to a balanced separator.
Our result is summarized by the following theorem. 

\begin{restatable}{theorem}{DistributedDeterministicSeparator}
\label{thm: det_separator}
    Let $G$ be an embedded planar graph of hop-diameter $D$,  $T$ be a spanning tree of $G$, and $w(\cdot)$ be a weight assignment to $G$'s vertices s.t. no vertex weighs more than $\frac{1}{12}$
    fraction of the total weight of $G$. There is a deterministic $\tilde{O}(D)$-round  distributed algorithm that finds a $u$-to-$v$ path  $P$ in $T$ that is a  $\frac{3}{4}$-balanced  separator of $G$. Adding the edge $e=(u,v)$ to $G$ (if it does not already exist), closes a fundamental cycle $P\cup\{e\}$ in $T$.
    Setting $T$ to be a BFS tree produces a separator of size $O(D)$. 
\end{restatable}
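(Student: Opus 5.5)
We split the proof into a structural part and an algorithmic part. For the structural part, fix the embedding and the spanning tree $T$. Let $T^*$ be the set of dual edges of the non-tree edges of $G$. By cut-cycle duality, $T^*$ is a spanning tree of the dual $G^*$. Each vertex $x$ picks one arbitrary incident face $f(x)$, and we set the face weights $\wF(f)=\sum_{x:f(x)=f}w(x)$. Every vertex is counted exactly once, so $\wF(G^*)=w(G)$.

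The difficulty is that $T^*$ may have unbounded degree, so a balanced edge cut of $T^*$ need not exist. Therefore we will not look for a single balanced edge. Instead we root $T^*$ and walk down toward the heavy side: from the current node, move to a child whose subtree weight exceeds $\frac{3}{4}$ of the total. Eventually we reach a dual node $f^\star$ (a face) with subtree weight $>\frac34 W$ all of whose children have subtree weight $\le\frac34 W$.

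Two cases follow. If some child subtree has weight in $[\frac14W,\frac34W]$, the dual edge $e^*$ to it is a balanced cut. Its primal edge $e=(u,v)$ is a non-tree edge, and the fundamental cycle $P\cup\{e\}$ encloses exactly the faces of that subtree. Every vertex whose chosen face is strictly inside lies inside the cycle or on $P$; likewise for outside. Since any path from inside to outside must cross the cycle, removing $P$ leaves each side of weight $\le \frac34 W$. The one subtlety is that $e$ itself may be an artificial edge crossing nothing real, but this does no harm: the cycle is still a Jordan curve and $G$'s edges do not cross it except at $P$.

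Otherwise all child subtrees have weight $<\frac14W$. Here we exploit the face $f^\star$ itself. Its boundary is a closed walk, and the non-tree edges on it are ordered cyclically. Grouping consecutive children's subtrees greedily yields a contiguous interval of children with total weight in $[\frac14W,\frac12W]$; this works because each child is $<\frac14 W$ and the complement contributes the rest. The two extreme edges of this interval, $e_1,e_2$, have endpoints that we connect. Pick boundary vertices $u$ of $e_1$ and $v$ of $e_2$ on $f^\star$. The tree path $P$ between $u$ and $v$, together with an artificial chord $(u,v)$ drawn inside $f^\star$, forms a cycle whose interior consists exactly of that interval of subtrees plus part of $f^\star$. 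One must also account for $\wF(f^\star)$ itself, meaning the vertices assigned to $f^\star$. These lie on the boundary of $f^\star$, so they are either on $P$ or on one of the two sides of the chord. This is where we need the condition that each vertex weighs $\le\frac1{12}W$, and where the constants $\frac14$ and $\frac34$ must be tuned (with some slack of $\frac1{12}$ per boundary vertex). I expect this chord case to be the main obstacle. The danger is that $\wF(f^\star)$ is large, built from many boundary vertices, and must itself be split by choosing $u,v$ appropriately along the boundary walk. The plan is to treat the vertices of $f^\star$ in boundary order as additional weighted items interleaved with the child subtrees, and to run the greedy interval argument on the combined cyclic sequence. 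Each item has weight $<\frac14 W$: a subtree by the case assumption, and a single vertex by the $\frac1{12}$ bound. So a contiguous arc with weight in $[\frac14W,\frac34W]$ exists.

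For the algorithmic part, the non-uniqueness of the face choice is harmless locally. Each vertex picks, say, the face between its first two edges in the rotation order. Face identifiers, the dual tree $T^*$, and subtree sums in $T^*$ are then computed in $\tilde O(D)$ rounds. For this we invoke the known distributed planar primitives used by Ghaffari–Parter, which compute face partwise aggregation and dual-tree subtree sums via low-congestion shortcuts, all of which are deterministic. The descent to $f^\star$, the check of the two cases, and the prefix sums along $f^\star$'s boundary are all partwise aggregations. Finally, if $T$ is a BFS tree, then $|P|\le 2D+1$.
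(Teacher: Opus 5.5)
Your argument is correct. In the critical case it takes a different route from the paper.

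\textbf{What coincides.} The weight transfer, the balanced-or-critical dichotomy, and your first case all match the paper; your descent to $f^\star$ is \cref{lemma: balanced_critical_node}.

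\textbf{The paper's route.} For a critical node, the paper assumes $G$ is bi-connected, so $f$ is a simple cycle $v_1,\dots,v_k$. It fan-triangulates $f$ from $v_k$. The new dual nodes $f_1,\dots,f_{k-2}$ form a path in the new cotree, and each absorbs at most three vertex weights. So they are $\frac14$-proper, which is exactly where $\frac1{12}$ is used. The paper then shows $f_{j+1}$ is balanced and reduces to Case~1. This is essentially a greedy from the parent edge, with triangles rather than single vertices as items.

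\textbf{Your route.} You add one chord and argue directly on the cyclic sequence of items around $f^\star$: child subtrees, the parent side, and the vertices that chose $f^\star$. You use that the two sides of $P\cup\{(u,v)\}$ are the two boundary arcs cut out by the chord. This is cut--cycle duality in $G+(u,v)$, where $f^\star$ splits into two dual nodes.

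\textbf{What your route buys.} It is more direct and needs less. Because the chord endpoints lie on $P$, the strict side containing the arc $I$ weighs at most $w(I)$, and the other at most $W-w(I)$. No slack is lost at $u,v$, so the ``$\frac1{12}$ per boundary vertex'' you worry about never arises. Vertex weights below $\frac14$ already suffice: every item is then $<\frac14W$, and the greedy arc lands in $[\frac14W,\frac12W)$. Distributedly, your version is a single prefix sum along the boundary of $f^\star$. The paper instead inherits Ghaffari--Parter's binary search over fan edges.

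\textbf{What the paper's route buys.} It reuses Case~1 literally and can cite the Ghaffari--Parter marking procedure unchanged.

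\textbf{Loose ends to tighten.}
\begin{itemize}
\item You never invoke bi-connectivity, so the boundary of $f^\star$ may be a closed walk. You must then place each repeated vertex as a single item and note that a vertex occurring in both arcs is forced onto $P$. You must also handle the degenerate case where both chord endpoints are occurrences of the same cut vertex, and carry out the prefix sum along a non-simple walk distributedly. The paper avoids all this by proving the bi-connected case and then adding the Li--Parter edges (\cref{lemma: biconnectivity_assumption_removal}); you could do the same.
\item The Ghaffari--Parter face primitives are not deterministic as published. Determinism relies on the derandomized Boruvka-style connectivity (Cole--Vishkin in place of coin flips) and on the deterministic shortcut constructions that the paper cites.
\item The descent should not be run as a walk down $T^*$, which may be deep. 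Instead detect $f^\star$ locally: it is the unique node of subtree weight $>\frac34W$ none of whose children exceeds $\frac34W$.
\end{itemize}
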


We note that a deterministic separator algorithm directly derandomizes the state-of-the-art distributed algorithms for classical problems on planar graphs such as depth-first search (DFS),  
single-source shortest-paths, maximum-flow, directed global min-cut, and reachability. This already follows from~\cite{deterministic_sep}. However, in~\cite{deterministic_sep} they only mention the application of DFS. We elaborate on the above other applications in \cref{sec:applications}. 

Our algorithm works with vertex-weighted graphs, where previous works \cite{GhaffariP17_dfs, deterministic_sep} focused on the unweighted case (i.e., the unit-weight case. In particular, note that no vertex weighs more than $1/n$ fraction of the total weight).
Finally, it is often useful for applications to compute a separator in multiple vertex-disjoint subgraphs of $G$ simultaneously. Our algorithm can be easily extended to support this as we show in \cref{section: distributed_implementation}.

\medskip
\noindent
{\bf Roadmap.}
In \cref{sec: preliminaries} we give the necessary preliminaries. In \cref{sec:overview} we overview related work and our technical contribution. Our deterministic separator algorithm is presented in \cref{sec: separator}, and its  distributed implementation in \cref{section: distributed_implementation}. The applications are  discussed in \cref{sec:applications}. Finally, 
Appendix~\ref{appendix: deferred_details_implementation} includes some deferred details from Section~\ref{section: distributed_implementation}.

\section{Preliminaries}
\label{sec: preliminaries}
\noindent
{\bf The $\mathsf{\mathbf{CONGEST}}$ model.}
We work in the standard distributed $\CONGEST$ model~\cite{peleg-book}. Initially, each vertex knows only its unique $O(\log n)$-bit ID and the IDs  of its neighbors. Communication occurs in synchronous rounds.
In each round, each vertex can send each neighbor a distinct $O(\log n)$-bit message.
When the vertices of $G$ are weighted, we assume that the weights are polynomially bounded integers.
Thus, the weight of a vertex can be transmitted in $O(1)$ rounds. This is a standard assumption in the $\CONGEST$ model.
Input and output are local, e.g. when rooting a spanning tree, we assume that all vertices know (as an input) the ID of the root $r$, and their incident edges in the tree. When the algorithm halts, each vertex knows the ID of its parent in the tree.

 \medskip
\noindent
{\bf Basic notation and graph theory.}
We denote by $G=(V,E)$ the vertex-weighted or face-weighted simple (connected) planar graph of communication. 
We denote by $D$ the hop-diameter of $G$ and by $F$ the set of faces of $G$ in a given planar embedding (defined next).
When the weights are assigned to vertices (resp. faces), we denote the weight function by $\wV(\cdot)$ (resp. $\wF(\cdot)$).
We denote by $w_{A}(G)$ the total weight of $G$ w.r.t. $w_{A}(\cdot)$ where $A\in\{V,  F\}$. 
When $A$ is clear from context we may omit it from the subscript and write $w(\cdot)$.

\medskip
\noindent
{\bf Planar embedding.}
The {\em geometric} planar embedding of a planar graph $G$ is a drawing of $G$ on a plane so that edges intersect only in vertices. In such embedding, there is a distinguished face $f_\infty$ (called the infinite face) surrounding the graph. When we consider a cycle $C$ in $G$, we refer to $C$'s {\em exterior} (resp. {\em interior}) as the side of $C$ that contains (resp. does not contain) $f_\infty$.
We denote by $C_{in}$ (resp. $C_{out}$) the interior (resp. exterior) of $C$, including $C$ itself. We denote the {\em strict} interior (resp. exterior) of $C$ (i.e., excluding $C$) by $C_{in}^-$ (resp. $C_{out}^-$). 

A {\em combinatorial} planar embedding of $G$ provides for each vertex $v\in G$, the local clockwise ordering of its incident edges, such that the ordering is consistent with some geometric planar embedding of $G$. See Chapter 3.4 of~\cite{KM_planarity_book} for more information.
Throughout, we assume that a combinatorial embedding of $G$ is known locally for each vertex. This is achieved in $\tilde{O}(D)$ rounds using the  \emph{deterministic} distributed planar embedding algorithm of Ghaffari and Haeupler~\cite{GhaffariH_embedding16}.

\begin{wrapfigure}{r}{0.25\linewidth}
  \includegraphics[width=1\linewidth]{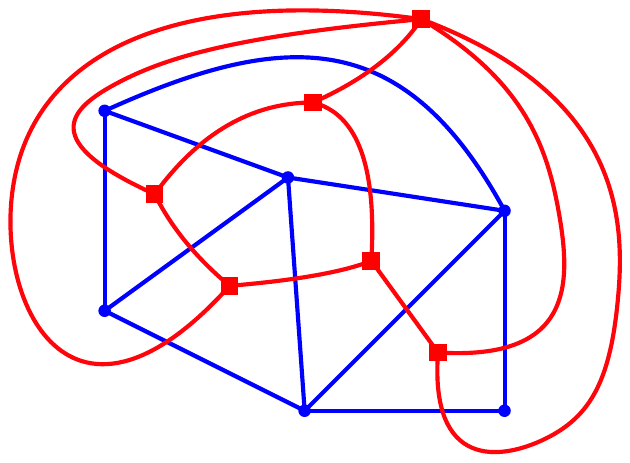}
 \end{wrapfigure}

\medskip
\noindent
{\bf Planar duality.}
\label{def: dual_graph}
The {\em dual} of a {\em primal} embedded planar graph $G$ is a planar graph $G^*$ whose nodes \footnote{For clarity, we will refer to faces of the primal graph $G$ as {\em nodes} (rather than vertices) of the dual graph $G^*$.} correspond to the faces of $G$.
For each edge $e\in G$ that belongs to two faces in $G$, there is an edge $e^*$ in $G^*$ between the faces' corresponding nodes (and a self-loop when the same face appears on both sides of an edge). We sometimes abuse notations and denote both $e$ and $e^*$ as $e$. 
See 
the figure for an illustration (the primal graph $G$ in blue and its dual graph $G^*$ in red).\\

\begin{lemma}[Tree and cotree, Chapter 4.5 of~\cite{KM_planarity_book}]
\label{fact: tree_cotree_decomposition}
     $T$ is a spanning tree of $G$  iff the {\em cotree} $T^*=E\setminus T$ is a spanning tree of $G^*$. 
\end{lemma}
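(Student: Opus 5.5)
The plan is to prove both directions by counting, using Euler's formula for the connected plane graph $G$ together with the fact that $G^*$ is connected. Write $n=|V|$, $m=|E|$, and $|F|$ for the number of faces, so that $n-m+|F|=2$. The two basic facts I will use are:
\begin{itemize}
\item a set $S\subseteq E$ contains a cycle of $G$ iff the dual edges of $E\setminus S$ fail to connect $G^*$;
\item $S$ contains a cycle of $G^*$ iff $E\setminus S$ fails to connect $G$.
\end{itemize}
Both are instances of cycle--cut duality: a simple cycle $C$ in $G$ separates the plane into interior and exterior. Both sides contain at least one face. Every dual edge crossing from the interior to the exterior is dual to an edge of $C$. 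Hence the duals of $C$'s edges form a cut in $G^*$. The symmetric statement follows by applying the same argument to $G^*$, using $(G^*)^*=G$ for connected plane graphs.

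For the forward direction, suppose $T$ is a spanning tree, so $|T|=n-1$. Because $T$ is acyclic, removing the edges of $T$ disconnects no pair of faces. Concretely, any cut of $G^*$ that separates two faces must be dual to an edge set containing a cycle of $G$, and $T$ contains no such cycle. So $T^*$, the dual of $E\setminus T$, is a connected spanning subgraph of $G^*$. Its edge count is $m-(n-1)=|F|-1$ by Euler's formula. A connected graph on $|F|$ nodes with $|F|-1$ edges is a tree, so $T^*$ is a spanning tree of $G^*$.

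For the converse, assume $T^*=E\setminus T$ is a spanning tree of $G^*$. Applying the forward argument to the dual (with the roles of $G$ and $G^*$ swapped) shows that $T$ is a connected spanning subgraph of $G$. Its edge count is $m-(|F|-1)=n-1$, so $T$ is a spanning tree. Self-loops in $G^*$ cause no trouble. A self-loop $e^*$ corresponds to a bridge $e$ of $G$. Such an edge must lie in every spanning tree $T$, and correspondingly it can never lie in a spanning tree of $G^*$, which is consistent with the claim.

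The main obstacle is making the cycle--cut duality rigorous. That step relies on the Jordan curve theorem, applied to the closed curve formed by $C$ in the embedding. I would not reprove this but cite it, as the paper does, from Chapter~4.5 of~\cite{KM_planarity_book}.
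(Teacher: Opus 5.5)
The paper gives no proof of this lemma; it only cites Klein--Mozes. So your argument has to stand on its own. The plan (Euler's formula plus cycle--cut duality) is right, the edge counts $m-(n-1)=|F|-1$ and $m-(|F|-1)=n-1$ are correct, and the self-loop remark is fine. The problem is that you justify one direction of duality and then use the other.

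\textbf{The gap.} Your Jordan-curve sketch shows only that the dual of a cycle of $G$ is a cut of $G^*$. Applied to $G^*$, it shows that the dual of a cycle of $G^*$ contains a cut of $G$. These are the ``$\Rightarrow$'' halves of your two bulleted facts. Both directions of your proof rely on the ``$\Leftarrow$'' halves instead:
\begin{itemize}
\item In the forward direction, the claim ``$T$ acyclic, hence $E\setminus T$ connects $G^*$'' needs every cut $\delta_{G^*}(A)$ to contain the dual of a cycle of $G$.
\item In the converse, you need every cut of $G$ to contain the dual of a cycle of $G^*$.
\end{itemize}
Neither statement follows from the Jordan argument you give. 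So, as written, the step everything rests on is unsupported.

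\textbf{Repair 1: check acyclicity instead of connectivity before counting.} This uses only the direction you did justify.
\begin{itemize}
\item Forward: suppose $E\setminus T$ contained a cycle of $G^*$. Its dual would contain a cut of $G$ disjoint from $T$, contradicting that $T$ is a connected spanning subgraph. So $T^*$ is a forest with $|F|-1$ edges on $|F|$ nodes, hence a spanning tree.
\item Converse: suppose $T$ contained a cycle $C$ of $G$. Then $C^*$ would be a cut of $G^*$ disjoint from $T^*$, contradicting that $T^*$ is connected. So $T$ is a forest with $n-1$ edges, hence a spanning tree.
\end{itemize}
This is the standard textbook proof.

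\textbf{Repair 2: keep your structure and prove the missing direction.} Take a bipartition $(A,F\setminus A)$ of the faces. Its cut is nonempty because $G^*$ is connected. Consider the primal edges separating a face of $A$ from a face of $F\setminus A$. These edges have even degree at every vertex $v$: walking once around the rotation at $v$, the incident face changes sides exactly when you cross such an edge, and you return to where you started. A nonempty edge set with all degrees even contains a cycle.
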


\noindent {\bf Fundamental cuts and cycles.} We use the duality between cycles defined by $T$ and cuts defined by $T^*$: 
\begin{definition}[Fundamental cuts]
    Removing an edge $e\in T$ from $T$, breaks $T$ into two connected components. The set $\delta_G(T,e)$ of edges of $G$ that have one endpoint in each component is called the fundamental cut of $T$ with respect to $e$ in $G$.
\end{definition}  

\begin{definition}[Fundamental cycles]
    Adding an edge $e\in G\setminus T$ to $T$, closes one simple cycle $C_G(T,e)$ with $T$.
    The cycle $C_G(T,e)$ is called the fundamental cycle of $T$ with respect to $e$ in $G$.
\end{definition}  

\noindent When $G$ is clear from the context, we use $\delta(T,e)$ and $C(T,e)$ to denote $\delta_G(T,e)$ and  $C_G(T,e)$ respectively.

\begin{lemma}
    [Fundamental cut-cycle duality~\cite{KM_planarity_book}]
    \label{fact: fundamental_cycle_cut_duality}
   Let $T^*$ be a spanning tree of $G^*$ and $T=E\setminus T^*$ be its cotree. $T^*$ defines a fundamental cut in $G^*$ w.r.t. an edge $e^*$ iff $T$ defines a fundamental cycle in $G$ w.r.t. $e$. That is, $\delta_{G^*}(T^*,e^*)=C_G(T,e)$. 
   \end{lemma}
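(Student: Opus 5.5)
We prove the fundamental cut-cycle duality, $\delta_{G^*}(T^*,e^*)=C_G(T,e)$ for every $e\notin T$. By \cref{fact: tree_cotree_decomposition}, $T^*$ is a spanning tree of $G^*$ exactly when $T=E\setminus T^*$ is a spanning tree of $G$. Fix an edge $e^*\in T^*$, so $e\notin T$ and $C=C_G(T,e)$ is a simple cycle in $G$. The plan is to prove the two inclusions $\delta_{G^*}(T^*,e^*)\subseteq C$ and $C\subseteq\delta_{G^*}(T^*,e^*)$ by relating the two components of $T^*\setminus\{e^*\}$ to the two sides of $C$ in the embedding.

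\textbf{Step 1: the sides of $C$ split the faces.} By the Jordan curve theorem, $C$ separates the plane into its strict interior and strict exterior. Every face of $G$ therefore lies entirely in $C_{in}^-$ or entirely in $C_{out}^-$. A dual edge $g^*$ joins faces on opposite sides of $C$ if and only if the primal edge $g$ lies on $C$: an edge off $C$ is drawn entirely on one side, so both of its incident faces are on that side. Conversely, each edge of a simple cycle has one incident face inside and one outside. Hence the set of dual edges crossing the bipartition (inside faces, outside faces) is exactly $C$.

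\textbf{Step 2: the tree components are the sides.} The only edge of $C$ that lies in $T^*$ is $e^*$, because $C\setminus\{e\}\subseteq T$ and so is disjoint from $T^*$. Consider the inside faces $F_{in}$, which are nonempty since $e$ has a face on each side. Any path in $T^*$ between two faces of $F_{in}$ that leaves $F_{in}$ must cross the cut from Step 1 at least twice. By Step 1, every crossing edge of $T^*$ is $e^*$ itself, and a tree path uses $e^*$ at most once. So in $T^*\setminus\{e^*\}$ no path connects an inside face to an outside face.

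Since $T^*\setminus\{e^*\}$ has exactly two components, and the endpoints of $e^*$ lie on opposite sides, the components cannot mix sides. Each component is contained in one side, and both sides are covered. Therefore the two components are exactly $F_{in}$ and $F_{out}$.

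\textbf{Step 3: conclude.} By definition, $\delta_{G^*}(T^*,e^*)$ is the set of dual edges with one endpoint in each component. By Step 2 this is the set of dual edges crossing the inside/outside bipartition, which by Step 1 is exactly $C$.

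The converse direction of the ``iff'' is immediate. Every $e\notin T$ corresponds to $e^*\in T^*$, and every $e^*\in T^*$ corresponds to $e\notin T$, so fundamental cycles and fundamental cuts are in bijection via this identity.

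The main obstacle is making Step 1 rigorous in the presence of self-loops and bridges in $G^*$, for instance a face appearing on both sides of an edge. For edges of a simple cycle this cannot happen, because the two sides are distinct regions. For an edge off $C$, the argument relies on the drawing of that edge avoiding $C$, which follows from the embedding being planar.
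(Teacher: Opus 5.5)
Your proof is correct. Note that the paper gives no proof of this lemma; it imports it from Klein--Mozes, so the comparison below is with the standard textbook route.

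That route usually has two ingredients. The first is simple-cycle/cut duality: the edge set of a simple cycle of $G$ is, viewed in $G^*$, a cut $\delta(S)$. In the combinatorial-embedding framework this paper adopts, it is established without appeal to the Jordan curve theorem. The second is a purely graph-theoretic observation. If a cut $\delta(S)$ of $G^*$ meets the spanning tree $T^*$ in exactly one edge $e^*$, then $S$ is a component of $T^*\setminus\{e^*\}$, so $\delta(S)=\delta_{G^*}(T^*,e^*)$.

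Your Step 1 is a topological proof of the first ingredient, with $S$ the set of faces inside $C$; only the ``is a cut'' half is needed, not minimality. Your Steps 2--3 are exactly the second ingredient.

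What your route buys is that it identifies the component of $T^*\setminus\{e^*\}$ not containing the outer side's faces with the faces strictly inside $C$. That is the paper's geometric notion of interior, and it is exactly the content of \cref{fact: face_weight_duality}, so you obtain both cited facts at once. What it costs is reliance on a geometric drawing and on the local two-sidedness of a Jordan curve near each edge of $C$. That is legitimate here, since the paper's combinatorial embeddings are by definition consistent with a geometric one.

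One simplification: the ``crosses at least twice'' remark in Step 2 is unnecessary. Any $T^*$-path from an inside face to an outside face must use a crossing edge, and the only crossing edge in $T^*$ is $e^*$.
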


\noindent
{\bf Separators.}
Let $\wA(\cdot)$ be a weight function on $A\in \{V,F\}$.
\begin{definition}[$\alpha$-proper weights]
    $\wA(\cdot)$ is $\alpha$-proper for $\alpha\in(0,1)$ if $w_{A}(a)\leq \alpha \cdot w_{A}(G)$ for each $a\in A$. 
\end{definition}

\begin{definition}[Balanced separators]
   Let $c<1$ be a constant. 
   A $c$-balanced separator $S$ is a set of vertices s.t. the connected components of $G\setminus S$ are of size at most $c\cdot \wA(
    G)$.
    \end{definition}
    
\begin{definition}[Fundamental cycle separators]
   Let $c<1$ be a constant, and let $T$ be a spanning tree of $G$. A fundamental cycle separator of $G$ is a $c$-balanced separator $S$ which constitutes a fundamental cycle of $T$. I.e., there exists an edge $e\notin T$ s.t. $S=C(T,e)$, and $\wA(\Sin^-),\wA(\Sout^-)\leq c \cdot \wA(G)$. 
\end{definition}

\noindent
{\bf Balanced and critical nodes.} We use the following definitions of~\cite{GhaffariP17_dfs} that discusses rooted trees.  For a node $f\in T^*$, let $T^*_f$ denote the subtree of $T^*$ rooted at $f$. We denote by $w(T^*)$ the total weight of all nodes of $T^*$.

\begin{definition}[Balanced node]
    $f\in T^*$ is $(\alpha,\beta)$-balanced for constants $\alpha<\beta\leq 1$  if $\alpha \cdot w(T^*)\leq w(T^*_f) \leq \beta\cdot w(T^*)$. 
\end{definition}

\begin{definition}[Critical node]
        $f\in T^*$ is $(\alpha,\beta)$-critical for constants $\alpha<\beta\leq 1$ if $w(T^*_f)> \beta \cdot w(T^*)$, and $w(T^*_h)< \alpha \cdot w(T^*)$ for every child $h$ of $f$ in $T^*$. 
\end{definition}

\section{Technical Overview and Related Work}
\label{sec:overview}

Computing a balanced separator is a fundamental building block in many algorithms for planar graphs.
In distributed algorithms, it is desired that the separator is a path $P$ in some given spanning tree $T$ s.t. adding to $P$ an edge $e\not \in T$ (perhaps even $e\not \in G$) that connects $P$'s endpoints closes a fundamental cycle $C(T,e)$ in $G$ (or in $G\cup \{e\}$ if $e\not \in G$). Setting $T$ to be a BFS tree, we get a separator of size $O(D)$.
The goal is therefore to find such an edge $e$ where $C(T,e)$ encloses a constant $0<c<1$ fraction of $G$'s total weight of vertices. 
 Finding such an edge $e$ is not an easy task. 
 
However, if the weights were assigned to faces (rather than vertices), then the problem becomes easy, as it amounts to finding a balanced node $f$ in the dual tree $T^*$. Indeed, such a node $f$ naturally defines a fundamental cut $\delta_{G^*}(T^*,e^*)$ in $G^*$ (i.e.,  $e^*=(f,\parent(f))$) and hence a fundamental cycle $C(T,e)$ in $G$ (\cref{fact: fundamental_cycle_cut_duality}). See \cref{fig: cycle_cut}. Since $f$ is balanced, the weight of $f$'s  subtree $T^*_f$ in $G^*$ (the weight of all faces enclosed by $C(T,e)$ in $G$) is a constant fraction of $G$'s total weight of faces. Formally:

 \begin{figure}[htb]
\begin{minipage}[c]{0.26\textwidth}
    \includegraphics[width=\textwidth]{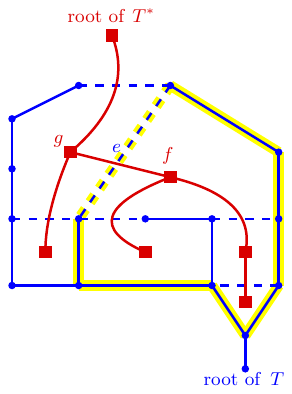}
  \end{minipage}\hfill
  \begin{minipage}[c]{0.6\textwidth}
    \caption{The primal graph $G$ is in blue. Solid blue edges are in $T$ and dashed blue edges are not in $T$. I.e., their duals (in red) are in $T^*$. Removing the dual edge $e^*=(g,f)$ from $T^*$ breaks $T^*$ into $T^*_f$ ($f$'s subtree in $T^*$) and $T^*\setminus T^*_f$. The fundamental cycle $C(T,e)$ (highlighted in yellow) encloses all faces that correspond to nodes of $T^*_f$, and does not enclose any face that corresponds to nodes of $T^*\setminus T^*_f$.
  \label{fig: cycle_cut}
 }
  \end{minipage}
\end{figure}

\begin{lemma}[See e.g., \cite{KM_planarity_book}]
\label{fact: face_weight_duality}
    Let $T$ be a spanning tree of $G$ and $e^*=(g,f)$ an edge of the cotree $T^*$ of $T$.
    If we root $T^*$ s.t. $g$ is closer to the root than $f$, then each node of $T^*_f$ maps to a face enclosed by $S=C(T,e)$.
    Analogously, each node of $T^*\setminus T^*_f$ maps to a face not enclosed by $S$. I.e., $w(T_f^*)=\wF(\Sin)$, and  $w(T^*\setminus T_f^*)=\wF(\Sout)$.
\end{lemma}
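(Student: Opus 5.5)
The plan is to combine \cref{fact: fundamental_cycle_cut_duality} with the standard fact that in a plane graph every face lies on exactly one side of a simple cycle. Fix the rooting of $T^*$ so that $g$ is the parent of $f$; removing $e^*$ splits $T^*$ into $T^*_f$ and $T^*\setminus T^*_f$. By \cref{fact: fundamental_cycle_cut_duality}, the set of dual edges crossing this cut is exactly the set of primal edges of $S=C(T,e)$. By the Jordan curve theorem, the closed curve $S$ splits the plane into an interior and an exterior, and every face of $G$ is an open region disjoint from $S$, so it lies entirely on one side.

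\textbf{Key step.} Two faces joined by a dual edge $h^*$ with $h\notin S$ lie on the same side of $S$. Indeed, the dual edge can be drawn as a curve through the interior of the two faces and the edge $h$, and this curve avoids $S$. Conversely, a dual edge $h^*$ with $h\in S$ joins two faces on opposite sides: near an interior point of the edge $h$, the two sides of $h$ lie in different regions of the plane minus $S$. Since $S$ is simple, each edge of $S$ bounds one face inside and one outside, so there are no self-loops among these edges.

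\textbf{Propagation.} The subtree $T^*_f$ is connected in $T^*$ using only edges of $T^*$ other than $e^*$, and these edges are not in the cut $\delta_{G^*}(T^*,e^*)=C(T,e)$. Hence all faces of $T^*_f$ lie on the same side as $f$, and likewise all faces of $T^*\setminus T^*_f$ lie on the same side as $g$. Since $e^*$ crosses $S$, the faces $f$ and $g$ are on opposite sides. Thus the nodes of $T^*$ split into exactly the two sides of $S$, matching the two subtrees.

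\textbf{Main obstacle.} It remains to identify which side is the interior. This is where the stated orientation matters: the claim that $T^*_f$ is enclosed implicitly requires the root of $T^*$ to be $f_\infty$, as is standard in this setting. With that choice, $f_\infty$ lies in $T^*\setminus T^*_f$, because it is the root and $f\neq$ root. Therefore $T^*\setminus T^*_f$ is the exterior side and $T^*_f$ is the interior. Summing face weights then gives $w(T^*_f)=\wF(\Sin)$ and $w(T^*\setminus T^*_f)=\wF(\Sout)$, where faces are counted in $\Sin$ or $\Sout$ according to the side they lie on.
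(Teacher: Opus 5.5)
Your proof is correct and is the standard argument behind this cited fact (the paper gives no proof beyond the reference and \cref{fig: cycle_cut}, which describe the same correspondence between the two components of $T^*-e^*$ and the two sides of $S$). You are also right that the statement needs the root of $T^*$ to be $f_\infty$; since the paper later roots $T^*$ at an arbitrary face, it should be read with the root face designated as the infinite face, which the combinatorial embedding permits and which is harmless because \cref{observation: vertex_to_face_weight} is symmetric in interior and exterior. As a minor simplification, your propagation step does not need \cref{fact: fundamental_cycle_cut_duality}: since $S\setminus\{e\}\subseteq T$ and $T^*=E\setminus T$, every edge of $T^*$ other than $e^*$ is already dual to an edge not on $S$.
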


Alas, the weights in the input graph $G$ are assigned to vertices and not to faces. Therefore, the idea is to transfer the vertex-weights to face-weights s.t. a balanced separator w.r.t. these face-weights is also a balanced separator w.r.t. the original vertex-weights (perhaps with a different balance constant). There are two challenges with this approach:

\medskip \noindent {\bf Challenge I - Assigning Weights to Faces.} 
   The natural naive assignment that assigns each face a weight equal to the total weight of its vertices, does not work.
   This is because a vertex may belong to many faces.
   Thus, a (primal) subgraph might have a small total vertex-weight but a large total face-weight. 
    In $T^*$, this means that a subtree $T^*_f$ of some node $f$ might have a large weight, where in fact it corresponds to a cycle that encloses a  small-weight 
    subgraph, while the external subgraph has a large weight.  Such a subtree might be mistakenly chosen to define the separator. I.e., $f$ can be a balanced node in the dual tree, but it does not necessarily correspond to a balanced separator in the primal graph with respect to vertex-weights. 
    
\medskip \noindent {\bf Challenge II - Existence of a Balanced Node.} 
       Even if we overcome the first challenge,  since faces can be arbitrarily large, it is not guaranteed that a balanced node $f$ exists in $T^*$. It may be the case that all subtrees of $T^*$ either weigh too much, or weigh too little.  For example, when $T^*$ is a star.
        
\medskip We next describe how prior work dealt with the above challenges, and how we deal with them. 

\medskip
\noindent
{\bf Lipton-Tarjan~\cite{LiptonT80_sep}.} In the 70's, Lipton and Tarjan introduced the above approach in the centralized setting.  
To overcome Challenge I, each vertex of $G$ transfers its weight to an arbitrary face that it participates in.
Thus, for any cycle of $G$, the weight of its (non-strict) interior w.r.t. faces and w.r.t. vertices is similar (each vertex assigns its weight to exactly one face that it lies on). 
In the centralized setting, Challenge II
is easily dealt with by {\em triangulating} $G$. I.e., augmenting $G$ with artificial edges to make each of its faces consist of exactly three edges. 
In the augmented graph, it is not hard to show that $T^*$ always admits a balanced node $f$. 
This is because now the dual graph $G^*$ is 3-regular, thus $T^*$ is a binary tree, and the weight assignment to faces is $3\alpha$-proper if the weight to vertices was originally $\alpha$-proper. 
In the distributed setting however, triangulation is not an option, since it may require adding a linear number of artificial edges which we cannot communicate on (and simulating them is too costly). This has motivated the following alternative direction of~\cite{GhaffariP17_dfs}.

\medskip
\noindent
{\bf Ghaffari-Parter~\cite{GhaffariP17_dfs}.} In 2017, Ghaffari and Parter~\cite{GhaffariP17_dfs}
    provided the first (near-optimal, $\tilde O(D)$-rounds) distributed procedure that computes a path separator.
    Their procedure is {\em randomized} and overcomes the above challenges as follows.
    
    To overcome Challenge I, they compute for each node $f$ of $T^*$ a weight $w'(T^*_f)$ that approximates the total vertex-weight of the primal subgraph corresponding to $T^*_f$. Intuitively, instead of assigning each face the weight of all its vertices (which as mentioned above leads to overcounting),
    they sample each vertex with a certain probability. The weight of a face is 1 if at least one of the vertices of the face is sampled and it is 0 otherwise.\footnote{This assumes {\em unit} vertex-weights, however it can be extended to arbitrary weights by adjusting the sampling probability.}  Eventually, for each node $f$ in $T^*$ they compute the subtree OR of it (to check if at least one of the values in the subtree is 1). They repeat the process a poly-logarithmic number of times with different sampling rates, and show that this allows them to estimate the total vertex-weight of the corresponding primal subgraph. This allows them to overcome Challenge I. However, the approach is heavily based on randomization.
    
    To overcome Challenge II, they prove that if no balanced node exists in $T^*$, then a critical node $f$ must exist. 
   After finding such critical node $f$, a single artificial edge $e=(u,v)$ is added to the interior of the corresponding primal face $f$ in $G$, splitting $f$ into two faces (replacing the node $f$ in $T^*$ with two nodes connected by the new dual edge $e^*$).  
   Finally, they prove that one of the two new nodes is a balanced node, hence defining a cycle separator consisting of the $u$-to-$v$ path $P$ in $T$, and the single artificial edge $e\not \in G$.
      
      Apart from being randomized, the only drawback of the Ghaffari-Parter algorithm is that it requires the input graph to be bi-connected (i.e., the removal of any single vertex does not disconnect the graph). Nonetheless, they show that the bi-connectivity assumption is not a restriction for the specific application of computing a Depth-First Search (DFS) tree in near-optimal $\tilde O(D)$ rounds (following an approach given in the  parallel algorithm of~\cite{AggarwalA88_dfs}).

\medskip
\noindent
{\bf Li-Parter~\cite{LiP19_diameter}.}
     In 2019, Li and Parter~\cite{LiP19_diameter} showed how to entirely remove the bi-connectivity assumption of~\cite{GhaffariP17_dfs}, by augmenting $G$ with certain (non-triangulation) artificial edges that do not violate planarity and can be simulated efficiently in the distributed setting.
Another important contribution of~\cite{LiP19_diameter}, is showing how to apply the separator algorithm recursively, obtaining a {\em Bounded Diameter Decomposition (BDD)} in $\tilde{O}(D)$ rounds. 
The BDD is a recursive decomposition of planar graphs using separators.
It is the distributed analog of the centralized recursive separator decomposition. 
It is highly non-trivial since, in contrast to the centralized decomposition, the distributed decomposition is required to preserve a low $\tilde{O}(D)$ diameter for all subgraphs in the decomposition.
 
Since the BDD works with~\cite{GhaffariP17_dfs}'s separator, it is also randomized. Nevertheless, it extends the applications of the separator to include (randomized) distributed algorithms for classical problems such as  shortest-paths~\cite{LiP19_diameter}, maximum $st$-flow~\cite{AbdElhaleemDPW25_flow}, diameter~\cite{LiP19_diameter}, and reachability~\cite{Parter20_reachability}. See also \cref{sec:applications}.

\medskip
\noindent
{\bf Jauregui-Montealegre-Rapaport~\cite{deterministic_sep}.}
Very recently, in PODC '25, Jauregui, Montealegre and Rapaport showed how to obtain a near-optimal $\tilde O(D)$-rounds distributed {\em deterministic} separator algorithm, following a completely different approach than~\cite{GhaffariP17_dfs},
which does not use planar duality at all. Instead, they address Challenge I directly on the primal graph, and show how to approximate what they call  {\em fundamental face weights}, which are the total vertex-weight of the primal subgraphs enclosed by $C(T,e)$ for all $e\in G$. 
In our terminology, they approximate  the total vertex-weight of the primal graph corresponding to a dual subtree $T^*_f$ for all $f\in T^*$.

After computing the fundamental face weights, they check for one that can be used as a separator.
If one does not exist, then there is no edge $e\in G$ that closes a cycle in $T$ that can be used as the separator.
To handle this (Challenge II), they approximate the total vertex-weight of subgraphs enclosed by  fundamental cycles $C(T,e)$ defined by a certain set of what they call {\em augmentation edges} $e\notin G$.
They prove that one such edge $e$ exists that closes a cycle $C(T,e)=P\cup\{e\}$ in $G\cup\{e\}$, where $P$ is the desired separator path, and compute it.
The computational tasks they do on $T$ for both cases are not very complicated, however, the proofs of correctness are. Namely, they involve examining multiple cases of augmentation edges and $G$ edges, 
related to properties of augmentation edges, to DFS tours on $T$,  and to the embedding of $G$.

As an application of their algorithm, they show that a deterministic DFS algorithm  follows from using their separator in the DFS algorithm of Ghaffari-Parter.

\medskip
\noindent
{\bf Our approach.} 
We show that a simple approach where each vertex transfers its weight to a single face works also in the distributed setting and without triangulation. Since each vertex transfers its weight to exactly one face this overcomes the issue of overcounting (Challenge I).
It extremely simplifies upon the approach of~\cite{deterministic_sep} for computing face-weights.
Both~\cite{GhaffariP17_dfs} and~\cite{deterministic_sep} 
design other solutions that lead to  either complicated or random algorithms to deal with the two challenges above. 

We achieve the best of both worlds.
Our algorithm is a combination of Lipton-Tarjan~\cite{LiptonT80_sep} and Ghaffari-Parter~\cite{GhaffariP17_dfs}, that allows us to exploit the power of planar duality, which significantly simplifies the algorithm and its correctness compared to~\cite{deterministic_sep}. Concretely, we deal with Challenge I exactly like Lipton-Tarjan (i.e. every vertex transfers its weight to an arbitrary face it lies on), and we deal with Challenge II exactly like Ghaffari-Parter (i.e. using the fact that if there is no balanced node then there must be a critical node). Our contribution is in showing that this simple approach works, and can be implemented distributively.

Our correctness proof is inspired by the proofs of Lipton-Tarjan and Ghaffari-Parter. However, as we cannot triangulate the graph, we need to carefully handle the case of a critical node. 
In addition, we handle vertex-weighted graphs and adapt the proof to our weight assignment to faces. 
Finally, the distributed implementation is similar to both~\cite{GhaffariP17_dfs, deterministic_sep} and uses very standard tools in distributed computing such as low-congestion shortcuts and part-wise aggregation.

\section{A Deterministic Separator Algorithm}
\label{sec: separator}
Our algorithm begins by transferring weights from vertices to faces using a simple procedure: Each vertex transfers its weight to one arbitrary face it belongs to. 
  The weight of the face is then the total weight of vertices that transferred their weight to this face. 
  Then, we find a node $f$ in the dual tree $T^*$ that is either a balanced node or a critical node (where the weights are now on nodes of $T^*$). We prove that the fundamental cycle separator defined by $f$ is also a separator  w.r.t. the original vertex-weights. 
  A pseudocode of the algorithm is given in \cref{algorithm: separator}. The algorithm mostly follows the (distributed) algorithm of~\cite{GhaffariP17_dfs}. 
  The highlighted steps are the steps where we differ from~\cite{GhaffariP17_dfs}. For each such step we provide a proof of correctness. 
  We assume that the input graph is {\em bi-connected}. 
  This assumption can be removed as in~\cite{LiP19_diameter}. We discuss this in more detail in the next section. 

\begin{algorithm}[htb]
    \caption{Separator}
    \label{algorithm: separator}
    
    \KwIn{A bi-connected $D$-diameter graph $G$, a spanning tree tree $T$ of $G$, and a
    $\frac{1}{12}$-proper
    weight assignment $\wV(\cdot)$ to the vertices of $G$.}
    \KwOut{A 
    $3/4$-balanced 
    path separator $P$ of $G$ and an edge $e\notin T$ s.t. $P\cup\{e\}=C(T,e)$. \hspace{1in} (If $e\notin G$, then, adding $e$ to $G$ preserves planarity).}
    
    \medskip 
    \begin{enumerate}[leftmargin=*,align=left]
    
    \item \label{alg_step: learn_cotree}

    Compute the cotree $T^*$\;
    
    \item \label{alg_step: face_weights}

    \hl{Transfer vertex-weights to face-weights. I.e., to nodes of $T^*$  (\cref{observation: vertex_to_face_weight})}\;
   
    \item \label{alg_step: balanced_critical_node}
    Detect a \hl{$(\frac{1}{4},\frac{3}{4})$}-balanced dual node or a \hl{$(\frac{1}{4},\frac{3}{4})$}-critical dual node
    \hl{(\cref{lemma: balanced_critical_node})}\;

   \item \label{alg_step: mark_separator} 
   Mark the separator path $P$ in $T$, and learn vertices $u,v\in G$ s.t. $S=P\cup \{e\}$, where  \\ $e=(u,v)$ is  possibly  not in $G$ \hl{(\cref{lemma: separator_existence})} :
   \begin{enumerate}
        \item 
        If a balanced dual node $f$ was found, mark $P$ and $e=(u,v)$ where $P$ is the $u$-\\to-$v$ path in $T$ and $e^*=(f,\parent(f))$ in $T^*$ \;
        \item 
        If a critical dual node $f$ was found, find vertices $u,v$ on the face $f$ s.t. if $e=(u,v)$ \\ is added  to $f$  it creates a balanced node in $T^*$. Mark the $u$-to-$v$ path $P$ in $T$ as be-\\fore.
    \end{enumerate}
   \end{enumerate} 
\end{algorithm}

\begin{observation}
    \label{observation: vertex_to_face_weight}
    Let $G$ be a planar graph with vertex-weights $\wV (\cdot)$, and $S$ a cycle in $G$. Then, the weight assignment where each vertex transfers its weight to an arbitrary face it belongs to  
    satisfies $\wV(S_{in}^-) \leq \wF(\Sin) \leq \wV(\Sin)$. 
    Moreover, $\wF(G) = \wV(G)$. 
\end{observation}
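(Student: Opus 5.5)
The proof is a direct counting argument based on where each vertex sends its weight. For every vertex $x$, let $\phi(x)$ denote the face to which $x$ transfers its weight. Then $\wF(f)=\sum_{x:\phi(x)=f}\wV(x)$ for each face $f$. Summing over all faces counts each vertex exactly once, because each vertex chooses exactly one face. This gives $\wF(G)=\sum_f \wF(f)=\sum_x \wV(x)=\wV(G)$, which is the ``moreover'' part.

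For the two inequalities, fix a cycle $S$ and use the Jordan curve theorem for the simple cycle $S$ in the embedding. Every face of $G$ lies entirely on one side of $S$, either inside or outside. Here $\wF(\Sin)$ is the total weight of the faces lying inside $S$, and it equals $\sum_{x:\phi(x)\subseteq \Sin}\wV(x)$.

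For the lower bound, let $x\in S_{in}^-$ be a vertex strictly inside $S$. Every face incident to $x$ contains $x$ on its boundary. Such a face cannot lie outside $S$, since the closure of an exterior face meets $\Sin$ only along $S$, and $x\notin S$. So $\phi(x)$ is an interior face, and the weight of every strictly interior vertex is counted in $\wF(\Sin)$. This gives $\wV(S_{in}^-)\le\wF(\Sin)$.

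For the upper bound, suppose $\phi(x)$ is an interior face. Then $x$ lies on the boundary of that face, so $x\in\Sin$. Hence every vertex contributing to $\wF(\Sin)$ is a vertex of $\Sin$. Since each vertex contributes at most once and all weights are nonnegative, $\wF(\Sin)\le\wV(\Sin)$.

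The only step needing care is the topological claim used in the lower bound: a face incident to a vertex strictly inside $S$ must itself be inside $S$. I would justify it as follows. The face is a connected open region disjoint from $S$, so it lies in one component of the complement of $S$. Its closure contains $x$, which lies in the open interior region, so the face must lie in that interior region as well.
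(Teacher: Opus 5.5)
Your proof is correct and follows the paper's argument: each vertex is counted exactly once, so $\wF(G)=\wV(G)$. Vertices strictly inside $S$ can only send their weight to interior faces, and vertices strictly outside $S$ cannot send theirs to interior faces. What you add is an explicit Jordan-curve justification of the topological facts the paper takes for granted. One small point: nonnegativity of the weights is used in the lower bound as well, since it absorbs the weight of cycle vertices that also send their weight inside, not only in the upper bound.
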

\begin{proof}
As each vertex transfers its weight to exactly one face, we clearly have $\wF(G) = \wV(G)$. 
In addition, if $S$ is a cycle, we have $\wV(S_{in}^-) \leq \wF(\Sin) \leq \wV(\Sin)$. The reason is that all vertices that are in the strict interior of $S$ transfer their weight to a face in $\Sin$ as all the faces that contain them are in $\Sin$, hence $\wV(S_{in}^-) \leq \wF(\Sin)$. Vertices that are part of the cycle $S$, can belong both to faces that are in $\Sin$ and to faces that are in $\Sout$, so they may transfer their weight to a face in $\Sin$ or not. Vertices that are in the strict exterior of $S$ are only contained in faces outside $S$ and do not transfer their weight to a face in $\Sin$. So overall we have $\wV(S_{in}^-) \leq \wF(\Sin) \leq \wV(\Sin)$. 	
\end{proof}

This simple observation  is the core of our algorithm, and the place where we differ  from the previous distributed solutions that deal with Challenge I (see  \cref{sec:overview}) in an either randomized or complicated way. 
Next, we prove that we can indeed find a fundamental cycle separator w.r.t. vertex-weights using this observation. 

Our separator would easily follow if one finds an $(\alpha,\beta)$-balanced dual node, as in Lipton-Tarjan, perhaps with different constants $\alpha,\beta$.
However, since we do not triangulate the graph, an $(\alpha,\beta)$-balanced node does not necessarily exist. If that is the case, then there must exist an $(\alpha,\beta)$-critical node as shown by~\cite{GhaffariP17_dfs}. 

\begin{restatable}[\cite{GhaffariP17_dfs}]{lemma}{LemmaBalancedCriticalNode} 
\label{lemma: balanced_critical_node} Let $G$ be a planar graph with face-weights $\wF(\cdot)$, $T^*$ a spanning tree of $G^*$ rooted at an arbitrary node, and constants $\alpha<\beta<1$.
Then, $T^*$ either contains an $(\alpha,\beta)$-balanced node, or an $(\alpha,\beta)$-critical node.
\end{restatable}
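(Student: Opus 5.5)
Let $W=w(T^*)$ and root $T^*$ at its designated root $r$. The plan is a descent argument: walk from the root down to a node that is either balanced or critical. First I dispose of the degenerate case $W=0$. Then every node satisfies $w(T^*_f)=0$, so every node $f$ satisfies $\alpha W\le w(T^*_f)\le\beta W$ and is balanced. (If $\alpha$ is required to be positive, this inequality still holds trivially.) So I assume $W>0$ and that no node is $(\alpha,\beta)$-balanced, and I aim to exhibit a critical node.

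The key observation is a dichotomy that holds under this assumption. For every node $f$, either $w(T^*_f)<\alpha W$ or $w(T^*_f)>\beta W$; there is no middle ground. The root satisfies $w(T^*_r)=W>\beta W$, since $\beta<1$ and $W>0$. Call a node \emph{heavy} if $w(T^*_f)>\beta W$.

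Now I descend. Start at $f_0=r$, which is heavy. Given a heavy node $f_i$, look at its children. If every child $h$ has $w(T^*_h)<\alpha W$, then $f_i$ is $(\alpha,\beta)$-critical by definition, and I stop. Otherwise some child $h$ has $w(T^*_h)\ge\alpha W$. By the dichotomy, that child must have $w(T^*_h)>\beta W$, so it is heavy, and I set $f_{i+1}=h$. Each step moves strictly deeper in the finite tree $T^*$, so the process terminates. It can only stop at a critical node: a leaf has no children, so a heavy leaf is vacuously critical.

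The only slightly delicate point is making sure the definitions line up at the boundary. Criticality uses strict inequalities ($>\beta W$ and $<\alpha W$), while balancedness uses non-strict ones. The dichotomy above is exactly what bridges them, so no gap arises. Equivalently, I could phrase the argument as choosing a deepest heavy node $f$ and noting that none of its children is heavy or balanced. I don't expect any real obstacle here. In particular, no planarity or degree bound is needed; the lemma is purely a statement about weighted rooted trees.
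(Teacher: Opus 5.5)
Your proof is correct and is essentially the paper's argument. The paper takes a deepest node $f$ with $w(T^*_f)>\beta\cdot w(T^*)$ and notes that any child $h$ with $w(T^*_h)\ge\alpha\cdot w(T^*)$ would be balanced, since maximality of depth rules out $h$ being heavy; this is your descent phrased via a deepest heavy node. Your separate treatment of $w(T^*)=0$ is a small improvement, because the paper tacitly relies on the root being heavy, which requires $w(T^*)>0$.
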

 \begin{proof}   If an $(\alpha,\beta)$-balanced node does not exist, then there is no node $f$ satisfying $\alpha\cdot w(T^*)\leq w(T^*_f)\leq\beta \cdot w(T^*)$. Hence, all nodes $f$ satisfy either $\alpha\cdot w(T^*)> w(T^*_f)$ or $w(T^*_f)>\beta\cdot w(T^*)$.
    We claim that the node $f$  furthest from the root (breaking ties arbitrarily) s.t. $w(T^*_f) > \beta\cdot w(T^*)$ is an $(\alpha,\beta)$-critical node, for otherwise it has a child $h$ of subtree of weight  $w(T^*_h) \geq \alpha\cdot w(T^*)$. Then, $\alpha\cdot w(T^*) \leq w(T^*_h) \leq \beta \cdot w(T^*)$, meaning $h$ is an $(\alpha,\beta)$-balanced node, contradicting the assumption.
\end{proof}

Now, we show that a 
$3/4$-balanced 
separator exists w.r.t. our weight assignment, by detecting either a balanced or a critical node. If a balanced node $f$ exists, then the edge $e^*=(f,\parent(f))$ in $T^*$ is the edge (dual to the edge $e$) that closes a cycle with $T$ forming the desired fundamental cycle separator.
Otherwise, a critical node $f$ exists. Intuitively, this node has a large subtree but its children have small subtrees, so none of them can define a balanced separator, but a consecutive subset of them can. In other words, we add artificial edges that are not in $T$ to the interior of $f$, creating multiple new nodes in $T^*$ resulting from partitioning $f$ to smaller faces. Those nodes are connected to the children of $f$ such that one of them is balanced. The edge $e$ connecting this balanced node to its parent, determines which children of $f$ are in which side of the separator.
By proving that, we prove that there always exist two vertices $u,v$ s.t. the $u$-to-$v$ path $P$ in $T$ is a balanced path separator of $G$. Then, adding $e$ (whether it exists in $G$ or not), we have that $S=P\cup\{e\}$ is a fundamental cycle separator of $G\cup\{e\}$.

A similar statement was proven by~\cite{GhaffariP17_dfs}, but for their way of dealing with the face weight challenge (Challenge I in \cref{sec:overview}). We show a proof that works with our weight assignment to faces.
In addition,~\cite{GhaffariP17_dfs} focused on unit vertex-weights and we allow arbitrary vertex-weights. 
 
\begin{lemma}
\label{lemma: separator_existence}
Let $G$ be  a bi-connected planar graph, $T$ a spanning tree of $G$, and a $\frac{1}{12}$-proper 
weight assignment $\wV(\cdot)$ to the vertices of $G$. After transferring vertex-weights $\wV(\cdot)$ to face-weights $\wF(\cdot)$ as in \cref{observation: vertex_to_face_weight}, then:

\begin{enumerate}
    \item 
    If a $(\frac{1}{4},\frac{3}{4})$-balanced 
    node $f$ in $T^*$ exists w.r.t. $\wF(\cdot)$, let $u,v$ be the endpoints of the edge $e$ that is dual to the edge $(f,\parent(f))$ in $T^*$. Then, the $u$-to-$v$ path $P$ in $T$ is a $\frac{3}{4}$-balanced 
    separator of $G$ w.r.t. $\wV(\cdot)$.  I.e., $P\cup\{e\}=C(T,e)$. 
    
\item 
    Otherwise, there exists a $(\frac{1}{4},\frac{3}{4})$-critical 
    node $f$ in $T^*$ w.r.t.~$\wF(\cdot)$. Then, there are two vertices $u,v$ on $f$ s.t. the $u$-to-$v$ path $P$ in $T$ is a $\frac{3}{4}$-balanced 
    separator of $G$ w.r.t.~$\wV(\cdot)$. Moreover, if we add the edge $e=(u,v)$ to $G$, it does not violate planarity, and $P\cup\{e\}=C(T,e)$. 
    
\end{enumerate}    
\end{lemma}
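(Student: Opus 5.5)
Part 1 follows from \cref{fact: face_weight_duality} combined with \cref{observation: vertex_to_face_weight}. Let $S=C(T,e)=P\cup\{e\}$. Since $f$ is $(\frac14,\frac34)$-balanced, \cref{fact: face_weight_duality} gives $\wF(\Sin)=w(T^*_f)\in[\frac14 W,\frac34 W]$, where $W=\wV(G)=\wF(G)$. By \cref{observation: vertex_to_face_weight}, $\wV(S_{in}^-)\le \wF(\Sin)\le \frac34 W$.

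For the exterior, apply the same observation with the roles of the two sides swapped. Every vertex of $S_{out}^-$ lies only on faces outside $S$, so it transfers its weight to a face counted in $\wF(\Sout)=w(T^*\setminus T^*_f)$. Hence $\wV(S_{out}^-)\le W-w(T^*_f)\le\frac34 W$.

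Every connected component of $G\setminus P$ lies entirely in $S_{in}^-$ or in $S_{out}^-$. This holds because $P$ contains all vertices of $S$, and $e$ joins two vertices of $P$. So $P$ is a $\frac34$-balanced separator.

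For part 2, \cref{lemma: balanced_critical_node} gives a critical node $f$. We have $w(T^*_f)>\frac34 W$, and every child $h_1,\dots,h_k$ of $f$ satisfies $w(T^*_{h_i})<\frac14 W$. Let $e_i\in G$ be the primal edge dual to $(f,h_i)$. The $e_i$ lie on the boundary of the face $f$; because $G$ is bi-connected, this boundary is a simple cycle. The edge $e_0$ dual to $(f,\parent(f))$ also lies on it (if $f$ is the root, there is no $e_0$). Together with tree edges of $T$, these edges appear in a cyclic order along the boundary of $f$.

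The plan is to order the children by the position of $e_i$ along the boundary, starting just after $e_0$. Take the smallest prefix $h_1,\dots,h_j$ with $\sum_{i\le j} w(T^*_{h_i})\ge \frac14 W - w(f)$. Then choose $u$ and $v$ on the boundary of $f$ so that the chord $e=(u,v)$, drawn inside $f$, splits $f$ into a part $f'$ bounded by $e_1,\dots,e_j$ and the boundary arc between them, and a remaining part $f''$ containing $e_0$ and the other $e_i$. Concretely, $u$ is the endpoint of $e_1$ that comes first along the boundary and $v$ is the far endpoint of $e_j$. A chord inside a face keeps the graph planar. Moreover, $e\notin T$ and $T\cup\{e\}$ has the unique cycle $C(T,e)=P\cup\{e\}$, so in the new cotree $f'$ is a child of $f''$ and $T^*_{f'}$ consists of $f'$ and the subtrees of $h_1,\dots,h_j$.

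The weight of $f$ must be handled carefully. The vertices that transferred their weight to $f$ lie on its boundary, and we may reassign each one to $f'$ or $f''$ according to which new face it lies on; $u$ and $v$ lie on both. Since $w(f)>\frac34W-\sum_i w(T^*_{h_i})$, the prefix sums of the children alone might not reach $\frac14W$; this is exactly why the boundary vertices' weight, each piece at most $\frac1{12}W$, is interleaved into the prefix.

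The main obstacle is showing that the prefix can be cut so that the total weight of $T^*_{f'}$ falls in $[\frac14 W,\frac34 W]$, or at least that the strict-side vertex weights are each at most $\frac34W$. This is a discrete intermediate-value argument over a sequence whose increments are child subtrees (each $<\frac14W$) and single boundary vertices (each $\le\frac1{12}W$). The total over the whole boundary is $w(T^*_f)>\frac34W$, so stopping at the first point where the sum reaches $\frac14W$ overshoots by less than $\frac14W+\frac1{12}W$, which keeps the interior at most $\frac7{12}W<\frac34W$.

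The exterior side is then bounded by $W-\frac14W$ plus a slack of order $\frac1{12}W$ from the vertices $u,v$ lying on $S$. I would verify this bound again using \cref{observation: vertex_to_face_weight} for the exterior, noting that vertices on $S$ are removed with $P$, so only strict sides count. Any leftover slack from the $\frac1{12}$ properness assumption absorbs the double-counted endpoints. Finally, by part 1's argument applied to the augmented graph $G\cup\{e\}$, $P$ is a $\frac34$-balanced separator of $G$.
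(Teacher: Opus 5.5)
Your Part 1 is fine and is essentially the paper's argument. Part 2 takes a different route, a single chord plus a sweep along $f$, and that route can be made to work. As written, however, the chord is anchored in the wrong place.

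\textbf{The anchor.} You fix $u$ as the near endpoint of $e_1$, after first using the threshold $\frac14W-w(f)$, which credits all of $w(f)$ to $f'$ and can be negative. With that anchor, every vertex before $u$ that sent its weight to $f$ is stranded in $f''$. The sweep starting at $u$ can then total less than $\frac14W$. For example, take a single light child across $(v_{k-1},v_k)$, with almost all of $w(T^*_f)$ sitting on $v_1,\dots,v_{k-2}$. And if $f$ has no children there is no $e_1$ at all; the paper handles that case separately. Your later claim that ``the total over the whole boundary is $w(T^*_f)$'' is usable only if the sweep starts at an endpoint of $e_0$.

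\textbf{The fix.} Write $f$ as $v_1,\dots,v_k$ with $e_0=(v_1,v_k)$. Let $x_i\le\frac1{12}W$ be the weight $v_i$ sent to $f$, and let $c_i<\frac14W$ be the subtree weight of the child across $(v_i,v_{i+1})$ ($0$ for tree edges). Then $x_1+c_1+x_2+\dots+c_{k-1}+x_k=w(T^*_f)>\frac34W$. The chord $(v_1,v_b)$ realizes exactly the prefixes ending at $c_{b-1}$ or at $x_b$: give $v_1$ to $f'$, and give $v_b$ to $f''$ or to $f'$. The first prefix reaching $\frac14W$ has sum $<\frac12W$, so the remainder exceeds $\frac14W>x_k$. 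Together with $x_1<\frac14W$, this gives $2\le b\le k-1$, so the chord genuinely lies inside $f$; for $b=2$ it is parallel to $(v_1,v_2)$, which is harmless.

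\textbf{The exterior bound.} Drop the ``slack of order $\frac1{12}W$'' remark. Each boundary vertex goes to exactly one of $f',f''$, so this is a legitimate transfer in $G\cup\{e\}$. Part 1 then gives $\wV(\Sout^-)\le W-\wF(\Sin)\le\frac34W$ with no slack. An additive $\frac1{12}W$ would break this bound, and nothing is left over to absorb it.

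\textbf{Comparison with the paper.} With these fixes, the paper instead fan-triangulates $f$ from $v_k$, using chords $(v_i,v_k)$, the mirror image of your $(v_1,v_b)$. This replaces $f$ by a dual path $f_1\to\dots\to f_{k-2}$. The $\frac1{12}$-properness makes each triangle weigh at most $\frac14W$. The child $f_{j+1}$ of the last path node with subtree weight above $\frac34W$ is then balanced, because one step drops less than $\frac14W+\frac14W$. Your sweep advances one vertex or one child at a time rather than one triangle. As a result it gives a sharper interior bound ($<\frac12W$) and absorbs the childless case. Started at an arbitrary vertex, it also covers a critical root, which the paper does not treat explicitly. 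It needs only that each vertex weigh less than $\frac14W$. The paper's version, in exchange, keeps everything phrased as finding a balanced node in a locally triangulated dual tree, so Case 1 applies verbatim.
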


\begin{proof}
By \cref{lemma: balanced_critical_node}, either a $(\frac{1}{4},\frac{3}{4})$-balanced node or a $(\frac{1}{4},\frac{3}{4})$-critical node exists  w.r.t. $\wF(\cdot)$.

\medskip 
\noindent  {\bf Case 1}: If a  
$(\frac{1}{4},\frac{3}{4})$-balanced node $f$ in $T^*$ exists w.r.t. $\wF(\cdot)$, then consider the fundamental cut $\delta(T^*,e^*)$, where $e^*=(f,\parent(f))$. 
The cut is dual to a fundamental cycle $S=C(T,e)$ in $G$, such that its interior corresponds to $T^*_f$ (\cref{fact: face_weight_duality}). See \cref{fig: cycle_cut} for an illustration.
We next prove that the total vertex weight of the strict interior and strict exterior of $S$ are both bounded by 
$\frac{3}{4}\cdot\wV(G)$. Then, setting $P$ to be $S\setminus\{e\}$ gives the balanced path separator.
    By \cref{fact: face_weight_duality}, $w(T^*_f)=\wF(\Sin)$. Also, as $f$ is 
    $(\frac{1}{4},\frac{3}{4})$-balanced we have:
   $\frac{1}{4}\cdot \wF(G)\leq w(T^*_f)=\wF(\Sin) \leq \frac{3}{4}\cdot \wF(G)$, 
    and since $\wV(S_{in}^-) \leq \wF(\Sin) \leq \wV(\Sin)$ (\cref{observation: vertex_to_face_weight}),
    we get that 
$\wV(S_{in}^-) \leq \wF(\Sin)\ \leq \frac{3}{4}\cdot \wF(G)$
 and   
   $\frac{1}{4}\cdot \wF(G)\leq \wV(\Sin).$
    Finally, because $\wF(G)=\wV(G)$ (\cref{observation: vertex_to_face_weight}) we get that both the strict interior is bounded by 
    $\wV(S_{in}^-) \leq \frac{3}{4}\cdot \wV(G)$
    and the strict exterior is bounded as we have 
    $\frac{1}{4}\cdot \wV(G)\leq \wV(\Sin),$
    which implies that 
    $\wV(\Sout^-)= \wV(G)-\wV(\Sin)\leq  \wV(G) - \frac{1}{4}\cdot \wV(G)=\frac{3}{4}\cdot\wV(G)$.
    Hence, we get a 
    $\frac{3}{4}$-balanced separator with respect to vertex-weights. 

\medskip
\noindent {\bf Case 2}: 
If a  
$(\frac{1}{4},\frac{3}{4})$-critical
node $f$ exists in $T^*$  w.r.t. $\wF(\cdot)$, then by the bi-connectivity assumption, $f$ is a simple cycle in $G$ \footnote{I.e, if not, then $f$ consists of a simple cycle and trees hanging from it. Then, removing a single vertex from the tree disconnects the graph, contradicting the assumption that it is bi-connected.}.  
    If $f$ has no children in $T^*$, then this cycle has only one of its edges $e=(f,\parent(f))$ in $T^*$. The rest of its edges form a path $P$ in $T$ (by \cref{fact: tree_cotree_decomposition}).  
 In this case, $P$ is the balanced separator, because the strict interior of the face (cycle) $f$ is empty (so weighs zero), and the strict exterior is of weight at most 
 $\frac{3}{4}\cdot \wV(G)$. This is because as $f$ is a critical node with no children in $T^*$, we have that 
 $w(T^*_f) = \wF(f) > \frac{3}{4} \cdot \wF(G) = \frac{3}{4} \cdot  \wV(G)$, where the last equality follows from \cref{observation: vertex_to_face_weight}. Now, if the weight of $f$ is larger than 
 $\frac{3}{4}\cdot  \wV(G)$ it means that the total weight of vertices in $f$ is at least 
 $\frac{3}{4} \cdot  \wV(G)$, because the weight of $f$ is the total weight of vertices in $f$ that transferred their weight to $f$. This means that the strict exterior of $f$ weighs 
 $\wV(G\setminus f)=\wV(G)-\wV(f)\leq \wV(G) - \frac{3}{4}\cdot\wV(G) < \frac{3}{4} \cdot\wV(G)$.

    Otherwise, $f$ has at least one child, and each child's  subtree weighs less than 
    $\frac{1}{4}\cdot \wF(G)$. We show that there are two vertices $u,v$ on $f$ s.t. the $u$-to-$v$ path $P$ in $T$ is a balanced separator for $G$ w.r.t. $\wV(\cdot)$. We do that by reducing this case to Case 1. 
    In other words, we show that if one would triangulate $f$ with artificial edges (see \cref{subfig: case2_augmented_graph}), then one of the resulting faces is 
    $(\frac{1}{4},\frac{3}{4})$-balanced. By Case 1, this implies that such two nodes $u,v$ exist. In this case, only the edge $(u,v)$ will be artificial.
   
    \begin{figure}[htb]
    \centering
    \begin{subfigure}{0.31\textwidth}
    \includegraphics[width=\textwidth]{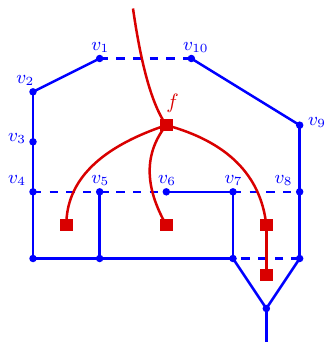} 
    \caption{Input graph \label{subfig: case2_input_graph}}
    \end{subfigure}
    \hspace{0.5 cm}
    \begin{subfigure}{0.27\textwidth}
    \includegraphics[width=\textwidth]{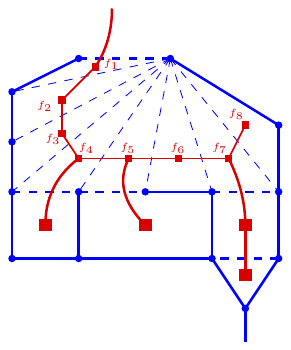}
    \caption{Augmented graph \label{subfig: case2_augmented_graph}}
    \end{subfigure}
    \hspace{0.5 cm}
    \begin{subfigure}{0.27\textwidth}
    \includegraphics[width=\textwidth]{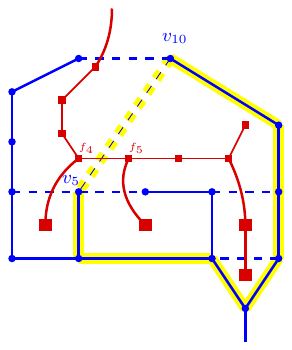}
    \caption{Separator \label{subfig: case2_separator}}
    \end{subfigure}
    \caption{\label{fig:case2}
    The primal graph is blue. The dual tree $T^*$ is red. The solid blue edges are edges of the primal tree $T$. Dashed edges are not in $T$ (i.e., their duals are in $T^*$).
    (a) $f$ is the face with vertices $v_1,v_2,\ldots,v_{10}$.
    The edge $e=(v_1,v_{10})$ corresponds to the dual edge $e^*=(f,\parent(f))$ in $T^*$.
    (b) Adding the artificial triangulation edges (thin dashed) to the interior of $f$ partitions $f$ into faces $f_1,f_2,\ldots, f_{8}$. In the new dual tree $T^*$, node $f_{i+1}$ is the child of $f_{i}$. The path $P(f)$ is the path from $f_1$ to $f_{8}$.
    (c) In this example, $f_j = f_4$.
    The dual edge $e^*=(f_4,f_5)$ in $T^*$ corresponds to the artificial primal edge $e=(v_5,v_{10})$ and defines the separator (highlighted in yellow).
    }
\end{figure}

    Formally, consider the primal edge $e$ whose dual is the edge $(f,\parent(f))$ in $T^*$. Let $v_1,\ldots,v_k$ be the vertices of $f$ such that $e=(v_1,v_k)$, and $f$'s vertices are enumerated counter-clockwise from $v_1$ to $v_k$. See \cref{subfig: case2_input_graph}.      
    For the sake of the proof, we triangulate $f$ by adding artificial edges 
    between $v_{k}$ and vertices $v_2,\ldots, v_{k-2}$ of $f$. This creates $k-2$ new faces $f_1,f_2,\ldots,f_{k-2}$, where $f_i$ is the face that contains the edge $(v_{i},v_{i+1})$ of $f$ (see \cref{subfig: case2_augmented_graph}).
    Adding the triangulation edges changes $T^*$ as follows: (1) The node $f$ is replaced by a $f_1$-to-$f_{k-2}$ path $P(f)$ of artificial edges, 
    (2) The parent of $f$ is now the parent of $f_1$, (3) Every $f_i$ (where $i<k-2$) has one (artificial) child $f_{i+1}$ and perhaps one more (real) child $g$, if $g$ was a child of $f$ in $T^*$ (i.e., when $(f,g)$ is the dual edge of $(v_i,v_{i+1})$), (4) $f_{k-2}$ is the last vertex of $P(f)$ so it does not have an artificial child, but it may have (at most) two real children (connected to $f_{k-2}$ via the edges dual to $(v_{k-2},v_{k-1})$ and $(v_{k-1},v_{k})$). 

        Consider $T^*$, where each node is assigned a weight as in \cref{observation: vertex_to_face_weight}. Recall that each vertex $v$ transfers its weight to one face. If this face is not $f$, then $v$ transfers its weight to the same face as before. If $v$ originally transferred its weight to $f$, then $v$ is contained in at least one face $f_i$. We then have $v$ transfer its weight to one such (arbitrary) face $f_i$. By this weight assignment the total weight of the new faces $f_i$ is exactly equal to the original weight of the face $f$. Since the original vertex weights were
    $\frac{1}{12}$-proper 
    and each $f_i$ has size 3 (i.e., gets the weight of at most 3 vertices), we have the following.
    \begin{claim}
    \label{claim: proper_face_weights}
        The weight of every $f_i$ is  
        $\frac{1}{4}$-proper.
    \end{claim}

We next prove that one of the dual nodes $f_i$ is balanced.

    \begin{claim}\label{claim:fi}
    Let $f_j$ be the lowest node on $P(f)$ whose subtree weighs more than  
    $\frac{3}{4}\cdot\wF(T^*)$
    but its children in $T^*$ have subtrees weighing at most 
    $\frac{3}{4}\cdot\wF(T^*)$.
       The child  $f_{j+1}$ of $f_j$ is
       $(\frac{1}{4},\frac{3}{4})$-balanced. 
    \end{claim} 
    \begin{proof}[Proof of \cref{claim:fi}.]   
    First we prove that such $f_j$ exists. Recall that $f$ was chosen such that it is a 
    $(\frac{1}{4},\frac{3}{4})$-critical node.
    This means that 
    $w(T^*_f) > \frac{3}{4} \cdot w(T^*)$,
    but all its children in $T$ have subtrees of weight smaller than 
    $\frac{1}{4}\cdot w(T^*)$. After we replace the face $f$ with the path $P(f)$, the total weight of all nodes $f_i$ in $P(f)$ is the same as the original weight of $f$, and the weight of all other nodes is the same. Therefore, we have that 
    $w(T^*_{f_1}) > \frac{3}{4}\cdot w(T^*)$. On the other hand, consider the last node $f_{k-2}$ on $P(f)$. We show that 
    $w(T^*_{f_{k-2}}) < \frac{3}{4} w(T^*)$. 
    For this, first note that 
    each face $f_i$ has three vertices, which means that in the dual graph the node $f_i$ has at most three neighbors. One of the neighbors of $f_{k-2}$ is the node $f_{k-3}$ of $P(f)$ (the parent of $f_{k-2}$ in $T^*$) and it has at most two children not in $P(f)$. Since $f$ is a critical node, each one of these neighbors $g$  not in  $P(f)$ has 
    $w(T^*_g) < \frac{1}{4}\cdot w(T^*)$, as explained above. Also, as the weights of the nodes $f_i$ are  
    $\frac{1}{4}$-proper (\cref{claim: proper_face_weights}), we get that $w(T^*_{f_{k-2}}) < \frac{1}{4} \cdot w(T^*) + 2 \cdot \frac{1}{4} \cdot w(T^*) = \frac{3}{4} \cdot w(T^*)$.

    Hence, there exists a node $f_j$ in $P(f)$ whose subtree weighs more than  
    $\frac{3}{4}\cdot\wF(T^*)$
    but its children in $T^*$ have subtrees weighing at most 
    $\frac{3}{4}\cdot\wF(T^*)$
    (note that the children not on $P(f)$ weigh less than 
    $\frac{1}{4}\cdot w(T^*)$ as explained above, and we are looking for the first $f_j$ in $P(f)$ where this condition holds). 
    Note that $f_j \neq f_{k-2}$ as we showed that
    $w(T^*_{f_{k-2}}) < \frac{3}{4} \cdot w(T^*)$. Hence, $f_j$ has a child $f_{j+1}$.
    By definition, we have that 
    $w(T^*_{f_{j+1}}) \leq \frac{3}{4} \cdot w(T^*)$.
    To show that $f_{j+1}$ is balanced we will show that 
    $w(T^*_{f_{j+1}}) > \frac{1}{4}\cdot  w(T^*)$.
    Recall that  
    $w(T^*_{f_j}) > \frac{3}{4}\cdot w(T^*)$.
    Recall, $f_j$ has at most 
    one child $g$ not on $P(f)$, in addition to the child $f_{j+1}$ on $P(f)$. The child $g$ not on $P(f)$ has 
    $w(T^*_g)<\frac{1}{4} \cdot w(T^*)$
    as explained above. And since the weights are $\frac{1}{4}$-proper we have that $\wF(f_j) \leq \frac{1}{4} \cdot w(T^*)$. On the other hand 
    $w(T^*_{f_{j+1}}) = w(T^*_{f_j}) - \wF(f_j) - w(T^*_g) > \frac{3}{4} \cdot w(T^*) - \frac{1}{4} \cdot w(T^*) - \frac{1}{4} \cdot w(T^*) = \frac{1}{4} \cdot w(T^*)$.
    Hence, $f_{j+1}$ is indeed a 
    $(\frac{1}{4},\frac{3}{4})$-balanced
    node as needed.
    \end{proof}
    
    By the above claim, Case 2 indeed reduces to Case 1. 
    That is, the edge $e=(f_j,f_{j+1})$ in the new dual tree defines two nodes $u,v\in G$ s.t. the $u$-to-$v$ path $P$ in $T$ is a 
    $3/4$-balanced
    path separator (in the augmented graph). See \cref{subfig: case2_separator}.
    It is not difficult to see that this is also the case after removing the artificial edges. 

\begin{restatable}{claim}{BalancedSepInputGraph} 
    \label{claim: balanced_sep}
             The $u$-to-$v$ path $P$ in $T$ is a  
             $3/4$-balanced
             path separator for the input graph (without artificial edges).
    \end{restatable}
 \begin{proof}  First, in the augmented graph with artificial edges we are exactly in Case 1: we find a balanced node, which translates to a $3/4$-balanced separator with respect to vertex-weights, as we proved in Case 1. This separator defines a cycle also in the original input graph (with possibly one additional edge). The set of vertices inside and outside this cycle are exactly the same in the augmented and original graph (and they have the same weight), and they are not affected by the artificial edges. Hence, the same separator is also a $3/4$-balanced separator in the original input graph. 	
 \end{proof}

Finally, $P$ and $e$ (if added to $G$) constitute a fundamental cycle $C(T,e)$ in $G\cup \{e\}$, and obviously $e$ does not violate planarity as it was embedded inside a face. 
    This concludes the proof of \cref{lemma: separator_existence}.
    \qedhere
    \end{proof}

The above concludes the correctness of \cref{algorithm: separator} which we summarize in the following \cref{thm: det_separator_sequential}.
In \cref{section: distributed_implementation} we show how to implement it distributively thus proving \cref{thm: det_separator}. 

\begin{theorem}
\label{thm: det_separator_sequential}
    Let $G$ be a bi-connected embedded planar graph, 
    $T$ a spanning tree of $G$, and $w(\cdot)$ a $1/12$-proper 
    weight assignment to $G$'s vertices. 
    Then, \cref{algorithm: separator}  deterministically finds a $3/4$-balanced path separator $P$ of $G$, where $P$ is a path in $T$. Moreover, there exists 
    an edge $e\notin T$ (maybe even $e\notin G$), such that, $S=P\cup\{e\}$ is a $3/4$-balanced fundamental cycle separator of $G$ (after adding $e$ in case $e\notin G$, in which case, $e$ does not violate planarity).
\end{theorem}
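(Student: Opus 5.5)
This theorem is essentially a summary of the correctness argument already developed, so the plan is to assemble the earlier results step by step along \cref{algorithm: separator}. The point to be careful about is that every choice the algorithm makes is deterministic.

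First, Step~\ref{alg_step: learn_cotree} computes the cotree $T^*=E\setminus T$. By \cref{fact: tree_cotree_decomposition}, $T^*$ is a spanning tree of $G^*$, and we root it at an arbitrary (say, fixed by ID) node. Step~\ref{alg_step: face_weights} performs the transfer of \cref{observation: vertex_to_face_weight}. Each vertex deterministically chooses one incident face, e.g., the face to the right of its lowest-ID incident edge in the combinatorial embedding. This gives $\wF(G)=\wV(G)$ and the sandwich bound $\wV(S_{in}^-)\le \wF(\Sin)\le \wV(\Sin)$ for every cycle $S$.

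Second, for Step~\ref{alg_step: balanced_critical_node} we invoke \cref{lemma: balanced_critical_node} with $\alpha=\frac14$ and $\beta=\frac34$. It guarantees that $T^*$ contains a balanced node or a critical node. Both can be found deterministically from the subtree weights $w(T^*_f)$: take any balanced node, or else the deepest node whose subtree weighs more than $\frac34 w(T^*)$, which is critical by the proof of that lemma.

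Third, for Step~\ref{alg_step: mark_separator} we apply \cref{lemma: separator_existence}, using the hypotheses that $G$ is bi-connected and $w$ is $\frac1{12}$-proper.
\begin{itemize}
\item In the balanced case, part~1 gives $e$ dual to $(f,\parent(f))$ with $P=C(T,e)\setminus\{e\}$ a $\frac34$-balanced separator. Here $e\in G$, and $e\notin T$ since $e^*\in T^*$.
\item In the critical case, part~2 gives $u,v$ on $f$. The choice is made deterministically: fix the counter-clockwise enumeration $v_1,\dots,v_k$ with $e_0=(v_1,v_k)$ dual to $(f,\parent(f))$, fan-triangulate from $v_k$, and take the first index $j$ as in \cref{claim:fi}. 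The path $P$ is then $\frac34$-balanced by \cref{claim: balanced_sep}. The edge $e=(u,v)$ is drawn inside face $f$, so adding it preserves planarity. Since $e\notin T$, the set $P\cup\{e\}$ is exactly the fundamental cycle $C(T,e)$ in $G\cup\{e\}$.
\end{itemize}

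Finally, the interior and exterior of $S=P\cup\{e\}$ are the same as those certified in the relevant case, so $S$ is a $\frac34$-balanced fundamental cycle separator. The main subtlety is not a calculation but a consistency check. In the critical case we must confirm that the artificial fan edges other than $(u,v)$ do not change which vertices lie strictly inside or outside $S$. This holds because those edges lie inside $f$ and add no vertices, and \cref{claim: balanced_sep} already records it. Beyond that, the only remaining point is determinism: every arbitrary choice (face assignment, root, tie-breaking) is resolved by IDs, so the output is well defined.
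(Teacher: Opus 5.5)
Your proposal is correct and is essentially the paper's own argument: the theorem is stated as a summary of \cref{observation: vertex_to_face_weight}, \cref{lemma: balanced_critical_node} and \cref{lemma: separator_existence} (with \cref{claim:fi} and \cref{claim: balanced_sep}), and you assemble exactly these, differing only in making the tie-breaking explicit. One minor edge case that you inherit from the paper is a critical $f$ that is the root of $T^*$, where $(f,\parent(f))$ does not exist; it is fixed by taking $e_0$ to be any edge of $f$ and rooting the new dual tree at $f_1$, after which the argument of \cref{claim:fi} goes through, with the new situation $j=1$ handled by $w(T^*_{f_1})=w(T^*)$, which absorbs the at most two real children of $f_1$.
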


\section{A Distributed Implementation}
\label{section: distributed_implementation}

In this section we present the distributed implementation of \cref{algorithm: separator}. We first define some notions and briefly overview a set of existing tools in distributed algorithms. 
\subsection{Distributed Tools}
\label{sec: distributed_tools}

The main computational task our algorithm performs is computing {\em aggregations}. This is by now a simple  and standard building block in $\CONGEST$ algorithms. The advantage of an algorithm that works only with aggregations is that, it can be easily extended to be applied on multiple subgraphs of $G$ simultaneously, which is useful for applications. We elaborate on this more in  \cref{sec:the_distributed_algorithm}.

\begin{definition}[Aggregate operator]
      Given a set of $b$-bit strings $\{x_1,\ldots,x_m\}$, an aggregate operator $\oplus$ over all $x_i$ is denoted $\bigoplus_i x_i$, and is defined as the result of repeatedly replacing any two strings $x_i, x_j$ with the $b$-bit string  $x_i \oplus x_j$, until a single string remains. We use simple commutative operators like {\em AND}, {\em OR}, {\em SUM}.
\end{definition}

\noindent
{\bf Low-congestion shortcuts and aggregations.} {\em Low-congestion shortcuts} is a well known tool, first introduced in~\cite{GhaffariH16_shortcuts}, and is used to communicate in multiple subgraphs of the input graph simultaneously, in particular, it allows us to solve the classic {\em part-wise aggregation (PA)} problem efficiently.

\begin{definition}
    [Part-wise aggregation (PA)]
    Let $\{G_i\}_{i=1}^{k}$ be a partition of $G$ into vertex-disjoint connected subgraphs, such that
    each $v\in G_i$ has an $\tilde{O}(1)$-bit string $x_v$. The part-wise aggregation problem asks that each vertex of $G_i$ knows the aggregate value $\bigoplus_{v\in G_i} x_v$.
\end{definition}

Since $G_i$s are given as an input, they might be of arbitrarily large diameter. This is the challenge that low-congestion shortcuts resolve.

\begin{definition}
    [Low-congestion shortcuts~\cite{GhaffariH16_shortcuts}]
    \label{def: shortcuts}
    Let $\{G_i\}_{i=1}^{k}$ be a partition of $G$ into vertex-disjoint connected subgraphs, a $(c, d)$-shortcut, is a set of subgraphs $\{H_i\}_{i=1}^{k}$ of $G$ such that the diameter of each $G_i\cup H_i$ is at most $d$ and each edge of $G$ participates in at most $c$ subgraphs $H_i$.
\end{definition} 
\begin{lemma}[Shortcuts in planar graphs~\cite{GhaffariH21_shortcuts,HaeuplerIZ21_shortcuts}]
\label{lemma: shortcuts}
    Planar graphs admit construction of low-congestion shortcuts with parameters $c,d=\tilde{O}(D)$ within $\tilde{O}(D)$ deterministic rounds.
\end{lemma}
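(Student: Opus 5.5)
The statement is Lemma~\ref{lemma: shortcuts}, a cited result, so my plan is to assemble it from known facts rather than build shortcuts from scratch. There are two parts to cover: existence, which needs congestion $c$ and dilation $d$ both $\tilde{O}(D)$, and a deterministic construction in $\tilde{O}(D)$ rounds.

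For existence, I would use the tree-restricted shortcut framework. Fix a BFS tree $T_B$ of $G$, which has depth at most $D$. Each part $G_i$ is allowed to use edges of $T_B$ as its shortcut $H_i$. In that framework, a congestion bound of $c$ for tree-restricted shortcuts gives a block parameter $b$. This means each $G_i\cup H_i$ decomposes into at most $b$ connected blocks, each of diameter $O(D)$. As a result, aggregation within a part costs $O(b\cdot D + c)$.

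For planar graphs, the key structural input is the Ghaffari--Haeupler argument. It shows that the natural assignment gives congestion $O(D\log D)$ and block number $O(\log D)$. The natural assignment is: each part takes the tree paths from its vertices up to the lowest ancestor level where no more than $O(D)$ parts overlap. The proof of this goes by contracting parts and applying planarity, using Euler's formula to bound the number of parts that can share a tree edge at a given depth. Combining the two bounds yields $c,d=\tilde{O}(D)$.

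For the deterministic construction, I would invoke the Haeupler--Izumi--Zuzic result. It says that whenever a graph admits tree-restricted shortcuts of quality $q$, such shortcuts can be found deterministically in $\tilde{O}(q)$ rounds. Their procedure repeatedly does three things: it tries to route a part upward in the tree, it detects overcongested tree edges by counting, and it halves the sets of active parts. It uses no random sampling, and because heavy edges can be counted exactly, the whole procedure can be derandomized. Plugging in the planar quality $q=\tilde{O}(D)$ gives $\tilde{O}(D)$ deterministic rounds.

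The main obstacle is the existence bound for planar graphs. In particular, the claim that only $\tilde{O}(D)$ parts need to cross any tree edge requires the minor-closed and planar counting argument, and it is the only nontrivial combinatorial ingredient here. Everything else is standard machinery, so in the paper I would simply cite~\cite{GhaffariH21_shortcuts,HaeuplerIZ21_shortcuts} for these two steps.
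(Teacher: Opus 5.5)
Your proposal is correct and takes essentially the same route as the paper. The paper gives no argument of its own for this lemma; it simply cites \cite{GhaffariH21_shortcuts,HaeuplerIZ21_shortcuts} for both the existence of (tree-restricted) shortcuts of quality $\tilde{O}(D)$ in planar graphs and their deterministic $\tilde{O}(D)$-round construction, which is exactly where your plan ends up.

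One caution on your informal sketch: under the climb-until-overcongested rule, the load on each tree edge is capped by the construction itself. An edge near the root can lie on the root paths of arbitrarily many parts, which is why such edges are dropped. So the planarity/minor-density counting is what bounds the number of blocks each part is split into, not the number of parts that can share a tree edge.
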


  After constructing shortcuts for a given partition $\{G_i\}_{i=1}^{k}$, the following task is solved deterministically in $\tilde{O}(c+d)$ rounds~\cite{HaeuplerIZ21_shortcuts,GhaffariH21_shortcuts}:

\begin{restatable}[PA in planar graphs~\cite{GhaffariH21_shortcuts,HaeuplerIZ21_shortcuts}]
{lemma}{PAPlanarGraphs}
\label{lemma: PA_primal}
     The PA problem can be solved in the given partition deterministically in $\tilde{O}(D)$ rounds on a planar graph $G$.
\end{restatable}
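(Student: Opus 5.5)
This is the standard result on part-wise aggregation in planar graphs, and I would derive it directly from \cref{lemma: shortcuts}. The plan is to build a low-congestion shortcut for the given partition, and then run the generic aggregation-over-shortcuts routine of~\cite{GhaffariH16_shortcuts}. That routine solves PA in $\tilde O(c+d)$ rounds. With $c,d=\tilde O(D)$ this gives $\tilde O(D)$, and every step is deterministic.

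\textbf{Step 1: constructing the shortcut.} Given the partition $\{G_i\}_{i=1}^k$ (each vertex knows its part, e.g., via a part identifier), I would invoke \cref{lemma: shortcuts}. This deterministically computes, in $\tilde O(D)$ rounds, subgraphs $H_i$ in which every $G_i\cup H_i$ has diameter $d=\tilde O(D)$ and every edge lies in at most $c=\tilde O(D)$ of the $H_i$.

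\textbf{Step 2: aggregating within each part.} In each $G_i\cup H_i$, I would select a leader, say the minimum ID in $G_i$. This is itself an aggregation, computed by flooding minima for $d$ rounds. From the leader I would grow a BFS tree of depth at most $d$, then convergecast $\bigoplus_{v\in G_i} x_v$ up this tree and broadcast the result back down to all of $G_i$.

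\textbf{Main obstacle: congestion on shared edges.} The $k$ instances run concurrently, and an edge shared by up to $c$ parts must carry messages for all of them. Naive pipelining costs $O(c\cdot d)$, which would only give $\tilde O(D^2)$. To get the additive $\tilde O(c+d)$ bound, I would use random-delay scheduling, derandomized as in~\cite{GhaffariH21_shortcuts,HaeuplerIZ21_shortcuts}. Alternatively, one can use the tree-restricted shortcut structure of planar graphs: in that structure each $H_i$ is a subtree of a global BFS tree, and aggregations can be routed along tree paths with block parameter $\tilde O(1)$. That makes the schedule deterministic, at $\tilde O(c+d)$ rounds. This scheduling argument is the step I expect to require the cited machinery, and I would take it as given from those works.

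Combining the two steps gives a deterministic PA solution in $\tilde O(D)+\tilde O(c+d)=\tilde O(D)$ rounds.
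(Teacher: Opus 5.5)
Your proposal matches the paper, which gives no separate proof and obtains the lemma the same way: it builds $(\tilde O(D),\tilde O(D))$-shortcuts deterministically via \cref{lemma: shortcuts}, then invokes the deterministic $\tilde O(c+d)$-round aggregation over shortcuts from~\cite{GhaffariH21_shortcuts,HaeuplerIZ21_shortcuts}, deferring the scheduling to those works just as you do. One small correction: as far as I know, the determinism in those works comes from the tree-restricted route you mention second (each $H_i$ is a union of $b=\tilde O(1)$ subtrees of a BFS tree, and aggregation costs $\tilde O(b(D+c))$), rather than from derandomizing random-delay scheduling.
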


The PA problem is a very powerful building block that is used in many state-of-the-art algorithms.
In our case, we will use the following procedure repeatedly.
\begin{lemma}[Lemma 16 of~\cite{GhaffariZ22_mincut}]
\label{lemma: root_ancestor_subtree}
    Let $T$ be a tree, and $r$ be a vertex of $T$. There is a deterministic algorithm that roots $T$ at $r$, and for each node $u\in T$ computes the subtree sum using $\tilde{O}(1)$ part-wise aggregations.
\end{lemma}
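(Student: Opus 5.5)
This lemma (Lemma 16 of the cited work) is imported rather than re-proved here. If I had to establish it, I would use an Euler-tour-free, heavy-light / fragment-merging approach based only on part-wise aggregations.

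\textbf{Rooting.} First, root $T$ at $r$ with a Borůvka-style contraction. Maintain a partition of $T$ into connected fragments, each already rooted internally at some fragment root. Initially every vertex is its own fragment. In each phase, every fragment that does not contain $r$ selects one incident tree edge leading to another fragment, using a PA with a min-ID operator. To avoid long chains and the resulting lack of progress, use a deterministic star-merging step: orient the fragment graph (which is a tree) and use a PA-simulable deterministic coin-tossing/matching step (Cole--Vishkin on fragment IDs) so that a constant fraction of fragments merge into stars. Merging a child fragment into its parent re-roots the child fragment at the attachment vertex. This amounts to reversing parent pointers along the path from the old fragment root to the attachment point. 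That path is found by a PA in which each vertex learns whether its subtree inside the fragment contains the attachment vertex, computed by aggregating over the fragment. The fragment containing $r$ keeps $r$ as root. After $O(\log n)$ phases there is one fragment rooted at $r$, and each phase uses $\tilde O(1)$ PAs.

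\textbf{Subtree sums.} Next, compute subtree sums by replaying the merge history in reverse order. Within each fragment at each level, a vertex's sum restricted to the fragment equals its own value plus the sums of whole child fragments that were attached at it or below it. So I would propagate bottom-up across the $O(\log n)$ levels. At each level, each vertex adds the totals of child fragments hanging below it on its fragment path, again via PA over fragments, combined with an ancestor-path prefix sum along the fragment. Alternatively, a cleaner route is heavy-light decomposition: the rooted tree is split into heavy paths, where any root-to-leaf path meets $O(\log n)$ light edges. For each heavy path, a suffix-sum along the path is a segmented scan that can be done with $O(\log n)$ PAs by pointer doubling on path positions. Light-edge contributions are then passed up level by level over $O(\log n)$ light-depth levels.

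\textbf{Main obstacle.} The main difficulty is performing the path-reversal and prefix/suffix computations along paths of possibly huge length using only part-wise aggregations. My first attempt would be pointer doubling, where each round merges parts in pairs along the path. This gives $O(\log n)$ PA rounds per level, for $\tilde O(1)$ PAs overall.
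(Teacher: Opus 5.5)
The paper does not prove this lemma; it imports it from Ghaffari--Zuzic. Your skeleton is the right one: deterministic Bor\r{u}vka star-merging (Cole--Vishkin) with $O(\log n)$ phases. But as written the sketch has a genuine gap, and it sits exactly where the lemma is nontrivial. You never give a way to obtain a \emph{per-vertex, position-dependent} quantity from part-wise aggregations.

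In the rooting step you find the old-root-to-attachment path ``by a PA in which each vertex learns whether its subtree inside the fragment contains the attachment vertex, computed by aggregating over the fragment.'' A PA on the fragment partition returns one and the same value to every vertex of the fragment. So it cannot tell $v$ whether \emph{its own} subtree contains $x$. That question is a subtree-OR, i.e.\ an instance of the very problem the lemma solves, so the rooting is circular.

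The subtree-sum part has the same problem. Heavy-light decomposition needs subtree sizes to choose heavy children, and positions along heavy paths need prefix sums. The ``replay'' route needs each vertex to add the totals of child fragments attached \emph{below} it, which is again subtree information; an ancestor-path prefix sum is the wrong operation there. Pointer doubling does not help either. A PA aggregates over connected parts, not along pointers, and pairing up path segments presupposes the path is already identified and ordered.

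The missing idea is to keep the merge hierarchy $\mathcal{H}_0\subset\dots\subset\mathcal{H}_L=\{T\}$, $L=O(\log n)$, as the data structure. Every per-vertex query is then answered by recursing \emph{down} the hierarchy with $O(1)$ PAs per level, run in parallel on all parts of that level.
\begin{itemize}
\item \textbf{Building the hierarchy.} Build it \emph{unrooted}, with star merges, so that each level-$i$ part's quotient tree over its level-$(i-1)$ sub-parts has depth at most $2$ once rooted. No re-rooting is then ever needed.
\item \textbf{Rooting, top-down.} Each part $F$ has a designated local root $\rho_F$, with $\rho=r$ at the top. $\rho_F$ broadcasts its sub-part's ID by a PA; this roots $F$'s quotient tree. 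Every other sub-part takes as its local root its endpoint of the unique edge to its parent sub-part, and by induction this is the vertex of that sub-part closest to $r$. Each tree edge is oriented at the unique level where its endpoints merge: the endpoint lying in the parent sub-part is the parent.
\item \textbf{Subtree sums, top-down.} For $v$ in a sub-part $S$ of $F$, $\mathrm{sub}_F(v;x)=\mathrm{sub}_S(v;x')$. Here $x'_u$ is $x_u$ plus the total weight of the quotient-subtrees of the child sub-parts attached at $u$. These totals take $O(1)$ PAs because the quotient tree has constant depth. Iterating down to level $0$, the answer is simply the transformed input.
\end{itemize}
The total cost is $O(\log n)$ levels times $O(1)$ PAs, plus $O(\log^* n)$ PAs per level for Cole--Vishkin during construction, i.e.\ $\tilde O(1)$ PAs.
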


\noindent
{\bf Dual computations.}
We will also perform computations related to faces of $G$ and to the dual tree $T^*$.
To do that, we use procedures of~\cite{GhaffariP17_dfs} that rely on low-congestion shortcuts and PA on a related graph $\hat{G}$, which can be fully simulated in $\CONGEST$ via $G$ with a constant overhead in the round complexity. See also Sections 3, 4 and appendix A of~\cite{AbdElhaleemDPW25_flow}. 

\begin{lemma}[Learn faces, Lemma 4 of~\cite{GhaffariP17_dfs}]
\label{lemma: know_faces}
    In  $\tilde{O}(D)$ deterministic rounds, a vertex $v\in G$ learns the IDs of faces that contain it, and for each incident edge $e$, learns the IDs of its endpoints in $G^*$.
\end{lemma}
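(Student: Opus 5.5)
The plan is to reduce the task to part-wise aggregation on an auxiliary graph whose parts are the faces of $G$. Each face is a closed walk in $G$, and its boundary edges define the face. First, every vertex locally splits its incident edges into angular wedges using the combinatorial embedding. Consecutive edges in the clockwise order at $v$ bound a common face. For each edge $e=(u,v)$ we link the wedge at $u$ and the wedge at $v$ that lie on the same side of $e$. In the resulting graph $\hat{G}$ (one copy of $v$ per wedge), the connected components are exactly the face boundaries. Since $\hat{G}$ is obtained from $G$ by splitting vertices along the embedding, it is planar. Each copy of $v$ is simulated by $v$ itself, and each edge of $G$ carries at most two edges of $\hat{G}$, so any $\CONGEST$ round on $\hat{G}$ costs $O(1)$ rounds on $G$.

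The key steps, in order, are as follows.
\begin{enumerate}
\item Construct $\hat{G}$ locally from the embedding, which is already known by~\cite{GhaffariH_embedding16}.
\item Run part-wise aggregation with the faces as the parts. The aggregate is the minimum ID over the wedge copies, where copy $(v,i)$ has ID $(\mathrm{ID}(v),i)$. This gives each face a canonical ID, and every vertex then knows the IDs of all faces it lies on.
\item For each incident edge $e$, vertex $v$ reads off the IDs of the faces on the two sides of $e$ from its two wedges adjacent to $e$. These two faces are the endpoints of $e^*$ in $G^*$.
\end{enumerate}
Step 2 is the substance of the proof. By \cref{lemma: PA_primal}, applied to the planar graph $\hat{G}$, it takes $\tilde{O}(D)$ deterministic rounds, and all the remaining work is local.

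The main obstacle is that the diameter of $\hat{G}$ need not be $O(D)$, so \cref{lemma: PA_primal} cannot simply be quoted for $\hat{G}$. I would handle this in one of two ways. The first option is to add to $\hat{G}$ the star edges joining all copies of the same vertex $v$. This restores diameter $O(D)$ and keeps the graph planar, since the copies of $v$ can be contracted back to $v$. The parts are still the face cycles, as the star edges are simply not used inside any part. Shortcuts (\cref{lemma: shortcuts}) are then built on this augmented graph, and congestion over a single real edge of $G$ grows by only a constant factor.

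The second option is to note that faces sharing a vertex overlap, so the parts are not vertex-disjoint in $G$. The copy construction is exactly what makes them disjoint, at the price of $\deg(v)$ copies at $v$. This is fine because the copies of $v$ communicate with each other internally for free. I expect the care to go into checking that the simulation overhead is $O(1)$ per edge, not $O(\deg)$. It is, because each real edge of $G$ carries only the two $\hat{G}$-edges on its two sides, together with shortcut traffic bounded by the shortcut congestion. This completes the $\tilde{O}(D)$ bound.
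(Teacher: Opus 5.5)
\textbf{Gap: Step~2 is circular.} Your auxiliary graph $\hat{G}$ is the right object; it is essentially the construction of~\cite{GhaffariP17_dfs} that the paper relies on. The problem is Step~2.

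Part-wise aggregation, as invoked through \cref{lemma: PA_primal}, takes the partition as \emph{given input}. Every vertex must already know which part it belongs to, because shortcut subgraphs of up to $\tilde{O}(D)$ different parts share edges, and their traffic is kept apart by part identity. The paper says this explicitly when generalizing to multiple subgraphs: shortcuts and PA are available because each vertex knows in which part it is contained.

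In your setting, a copy $(v,i)$ does not know which face it lies on. It only knows its two face-neighbours in $\hat{G}$. Giving each face a common ID is exactly the task. So ``run PA with the faces as parts and take the minimum ID'' presupposes its own output.

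What you actually have is a connected-components problem. The faces are the components of the subgraph of $\hat{G}$ formed by the face edges, and they must first be identified. The paper points to exactly this when it says these face procedures rely on a Boruvka-like connectivity algorithm, which was the randomized ingredient of~\cite{GhaffariP17_dfs}.

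\textbf{The missing step.} You need a deterministic component-identification routine:
\begin{itemize}
\item Run Boruvka-style merging starting from singleton fragments; each copy's own ID is a valid part ID.
\item In each phase, every fragment that is not yet a whole face proposes a merge along an incident face edge.
\item Turn the proposals into star-shaped merges deterministically, using Cole--Vishkin coloring instead of coin flips, as in~\cite{GhaffariZ22_mincut}.
\item Broadcast the centre's ID inside each merged fragment.
\end{itemize}
Each phase costs $\tilde{O}(1)$ PAs over the current, already identified fragments. After $O(\log n)$ phases each face is a single identified fragment, for $\tilde{O}(D)$ deterministic rounds in total. After that, your Step~3 is fine.

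\textbf{Two smaller points.} First, these PAs need the star edges. Without them $\hat{G}$ is a disjoint union of face cycles, so your first ``option'' is mandatory, not one of two alternatives.

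Second, the claim that ``the copies of $v$ can be contracted back to $v$'' does not show the augmented graph is planar. Contraction preserves planarity, but the converse fails. Argue it directly by giving a drawing:
\begin{itemize}
\item Put one copy of $v$ in each wedge on a small circle around $v$.
\item Route the two edges replacing $(u,v)$ along the two sides of $e$.
\item Draw the star inside the small disk around $v$, which no other edge enters.
\end{itemize}
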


\begin{lemma}[Aggregations on faces, Corollary 5 of~\cite{GhaffariP17_dfs}]
\label{lemma: aggregates_on_faces}
    Assuming each vertex $v$ has an input $x_{v,f}$ for each face $f$ of $G$ that it participates in. Then, 
    in $\tilde{O}(D)$ deterministic rounds any  aggregate operator $\bigoplus$ is computed over the input of vertices of each face $f$ in $G$ s.t. each vertex $v\in f$ learns $\bigoplus_{u\in f}x_{u,f}$.
\end{lemma}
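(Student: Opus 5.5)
We prove the statement by reducing it to part-wise aggregation (\cref{lemma: PA_primal}) on an auxiliary planar graph $\hat G$. In $\hat G$ every face of $G$ becomes a connected part, and $\hat G$ can be simulated on $G$ with constant overhead.

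\textbf{Step 1: build $\hat G$.} For each vertex $v$ of degree $d$, take the combinatorial embedding (clockwise order $e_1,\dots,e_d$ of incident edges) and create $d$ \emph{corner} copies $v_1,\dots,v_d$, where $v_i$ is the corner between $e_i$ and $e_{i+1}$. Each corner lies on exactly one face. Add one hub copy $\hat v$ and connect it to all corners of $v$ (a star drawn inside a small disk around $v$).

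For each edge $e=(u,v)$ of $G$, note that $e$ borders two (possibly equal) faces. On each side, connect the corner of $u$ and the corner of $v$ that are consecutive along that face's boundary walk through $e$. Drawing these two $\hat G$-edges parallel to $e$, one on each side, keeps $\hat G$ planar.

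\textbf{Step 2: check the properties of $\hat G$.} The corners of a fixed face $f$ are linked along the boundary walk of $f$, so they induce a connected subgraph $\hat G_f$. The parts $\{\hat G_f\}_f$ are vertex-disjoint. Their union with the hubs (which form singleton parts) covers $\hat G$. Any $u$-to-$v$ path of length $\ell$ in $G$ lifts to a path in $\hat G$ of length at most $3\ell+2$, by going hub $\to$ corner $\to$ corner $\to$ hub. Hence $\mathrm{diam}(\hat G)=O(D)$.

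\textbf{Step 3: simulate $\hat G$ on $G$.} Vertex $v$ simulates all of its copies. Star edges are internal to $v$ and cost no communication. Each edge $e$ of $G$ carries the traffic of at most two $\hat G$-edges, so one round of $\hat G$ costs $O(1)$ rounds of $G$. To know which corner belongs to which face, $v$ first applies \cref{lemma: know_faces}, which gives it face IDs for its corners.

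\textbf{Step 4: aggregate.} Assign to one corner of $v$ in $f$ the input $x_{v,f}$. If $v$ appears on $f$ several times, its other corners in $f$ get the neutral element of $\bigoplus$; this avoids double counting for SUM. Hubs receive arbitrary inputs, since their parts are irrelevant. Because $\hat G$ is planar with diameter $O(D)$, \cref{lemma: shortcuts} and \cref{lemma: PA_primal} solve PA deterministically in $\tilde O(D)$ rounds of $\hat G$, which is $\tilde O(D)$ rounds of $G$. Afterwards every corner of $f$, and therefore every vertex $v\in f$, knows $\bigoplus_{u\in f}x_{u,f}$.

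The main point to verify carefully is the planarity and low diameter of $\hat G$. In particular, the hub star must be what keeps the diameter $O(D)$: a path of corners around a high-degree vertex would blow it up. The simulation must also incur only constant overhead even when a vertex has many corners; this holds because all communication between a vertex's own copies is local.
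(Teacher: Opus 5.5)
Your proof is correct and takes essentially the route the paper relies on via \cite{GhaffariP17_dfs}: you build the face-disjoint auxiliary graph $\hat G$ (hub plus corner copies), in which every face becomes a vertex-disjoint connected part, check that $\hat G$ is planar, has diameter $O(D)$, and can be simulated on $G$ with $O(1)$ overhead, and then apply planar low-congestion shortcuts and part-wise aggregation. One minor remark: the labeling of corners by face IDs is most cleanly obtained by a min-ID part-wise aggregation on $\hat G$ itself, since the parts are determined locally by the corner-to-corner edges; this avoids needing side-specific face information from the learn-faces lemma.
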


\begin{lemma}[Subtree sums on dual tree, Section 4.2.2 of~\cite{GhaffariP17_dfs}]
\label{lemma: dual_subtree_sums}
    Let $g$ be a face whose ID is known to all $v\in G$, and let $T^*$ be a spanning tree of the dual graph $G^*$, such that, each vertex $v\in G$ knows for each incident edge $e$ whether it is in $T^*$, and $v$ knows the two IDs of faces that $e$ participates in. In addition, for each face $f$, there is an input value $x_f$ that is known to all of its vertices.
    Then, within $\tilde{O}(D)$ deterministic rounds, $T^*$ is rooted in $g$ and subtree sums are computed on $T^*$ w.r.t. $g$ as the root. I.e., each vertex $v$ of a face $f$ knows the ID of the parent of $f$ in the rooted $T^*$, and $v$ knows the subtree sum $\sum_{f'\in T^*_f}x_{f'}$.
\end{lemma}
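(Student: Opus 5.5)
The statement just before this point is \cref{lemma: dual_subtree_sums}, which the paper attributes to Section 4.2.2 of~\cite{GhaffariP17_dfs}. My proof would reduce subtree-sum computation on $T^*$ to part-wise aggregations and subtree sums on the primal graph, using planar cut-cycle duality together with \cref{lemma: root_ancestor_subtree}, \cref{lemma: PA_primal} and \cref{lemma: aggregates_on_faces}. The structural fact I would rely on is \cref{fact: face_weight_duality}. Rooting $T^*$ at $g$ and taking any cotree edge $e^*=(h,f)$ with $h$ the parent, the subtree $T^*_f$ consists exactly of the faces enclosed by the fundamental cycle $C(T,e)$, provided $g$ is treated as the infinite face $f_\infty$. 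This is harmless, because re-embedding so that $g$ is outer changes no local rotation. So the subtree sum at $f$ equals the total of $x_{f'}$ over faces strictly inside $C(T,e)$.

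First, each vertex of every face $f$ learns $x_f$ and the IDs of the faces on both sides of its incident edges, via \cref{lemma: know_faces}. I would then move each face value onto the primal tree $T$. Each face $f\neq g$ charges $x_f$ to a canonical vertex or edge of its boundary. Computing, for each non-tree edge $e$, the total of face values inside $C(T,e)$ then becomes a tree-path/LCA computation on $T$, done in the Lipton--Tarjan style with a DFS (Euler-tour) order of the rooted $T$. Under the combinatorial embedding with $g$ outer, the faces inside $C(T,e)$ are exactly those whose charged positions lie in a contiguous interval of the Euler tour, up to a correction at $\mathrm{lca}(u,v)$.

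DFS intervals of $T$ come from subtree sums: I would root $T$ and compute, for each vertex, the total charge of the subtrees of the children ordered before it in the rotation. This is a prefix sum over ordered children, obtainable with $\tilde O(1)$ part-wise aggregations (\cref{lemma: root_ancestor_subtree}, \cref{lemma: PA_primal}). It gives each vertex its Euler-tour offset and the prefix charge up to that point. For $e=(u,v)$, the endpoints exchange their offsets across $e$ and obtain the interior weight as a difference of prefix sums, so each dual edge learns its subtree sum. Orienting $T^*$ is then local: the side of $e^*$ lying inside $C(T,e)$ is the child, determined from the rotation at $u$ relative to the tree paths. Every vertex of $f$ learns $f$'s parent and subtree sum by one aggregation on faces (\cref{lemma: aggregates_on_faces}), because exactly one boundary edge of $f$ is its parent edge in $T^*$.

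The main obstacle is making the interval characterization exact in the presence of the embedding. I need to show that faces inside $C(T,e)$ correspond precisely to a contiguous Euler-tour segment starting at $u$'s outgoing position on the correct side of $e$ and ending at $v$. I also need to correctly handle the corners at $\mathrm{lca}(u,v)$ and faces touching the cycle from outside. I would first attempt this with a corner-based charging: each face charges its value to a single corner, meaning a vertex together with a consecutive pair of edges in its rotation. Corners are linearly ordered by the Euler tour, and planarity makes the corners strictly inside $C(T,e)$ form one interval.
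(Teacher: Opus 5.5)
Your approach is sound, but it takes a different route from the one the paper relies on. The paper does not reprove this lemma. It imports it from Ghaffari--Parter, where the computation is done on the dual tree itself: faces are handled as parts via low-congestion shortcuts on the auxiliary graph $\hat G$, and $T^*$ is rooted and its subtree sums computed by the standard Boruvka-style tree procedure, derandomized with Cole--Vishkin. This is essentially a dual analogue of \cref{lemma: root_ancestor_subtree}.

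You never run tree operations on $T^*$. You use that $T=E\setminus T^*$ is a primal spanning tree (\cref{fact: tree_cotree_decomposition}), identify $T^*_f$ with the faces inside $C(T,e)$, and obtain its weight as a difference of two embedding-respecting Euler-tour prefix sums on $T$, read off at the endpoints of the real edge $e$. The only dual-side tool you need is face aggregation (\cref{lemma: aggregates_on_faces}), to pick one charging corner per face and to spread the parent ID and subtree sum over each face.

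What each route buys: yours makes the dual side elementary, close to Lipton--Tarjan and to the DFS-tour reasoning of~\cite{deterministic_sep}, at the cost of a planarity-specific interval lemma. The imported route needs no embedding argument and gives arbitrary tree aggregates on $T^*$, not just subtree sums.

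To make your version airtight, commit to corner charging and drop both the vertex/edge charging and the ``correction at $\mathrm{lca}(u,v)$''. Root $T$ at a vertex of $g$ and start the tour inside a corner of $g$. Cutting the sphere along $T$ leaves a disk whose boundary is this Euler tour. The non-tree edges are pairwise non-crossing chords attached between consecutive corners, and the faces are the regions they cut out. Since the start corner belongs to $g$, the faces of $T^*_f$ are exactly those whose corners lie strictly between the two attachment positions of $e$, with no exceptional cases. The child endpoint of $e^*$ is the face owning the corner adjacent to $e$ on that inner side.

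Two further points need attention:
\begin{itemize}
\item The Euler-tour offsets require ancestor sums on $T$, on top of the sibling prefixes computed locally at each parent, not only the subtree sums stated in \cref{lemma: root_ancestor_subtree}. These are likewise obtainable with $\tilde O(1)$ part-wise aggregations.
\item You need the face ID of each corner (each side of each edge), not just the unordered pair of faces per edge. Faces are identified via their boundary walks in the face-identification step behind \cref{lemma: know_faces}, which is where this information comes from.
\end{itemize}
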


\noindent
{\bf Derandomized components.}
Procedures that use~\cite{GhaffariP17_dfs} (like \cref{lemma: know_faces}, \cref{lemma: aggregates_on_faces}, and \cref{lemma: dual_subtree_sums})
were originally randomized.
However, they can be easily derandomized. In particular, those procedures use a classic randomized connectivity algorithm~\cite{GhaffariH16_shortcuts}. But, by~\cite{AbdElhaleemDPW25_flow} (footnote 20 in Appendix A), these procedures can be  derandomized using a derandomization of~\cite{GhaffariZ22_mincut} for the same connectivity algorithm.
In more detail, the connectivity algorithm is Boruvka-like~\cite{Boruvka}. Initially, each vertex is its own component. Then, distinct components merge over incident edges in phases. In order to have only a small (polylogarithmic) number of phases, it is beneficial if the merges have a simple structure, like star-shape. One common way to achieve it is to use a fair coin flip. It is however well-known, as shown for example in~\cite{GhaffariZ22_mincut}, that this can be achieved deterministically using the classical Cole-Vishkin~\cite{ColeVishkin} coloring algorithm.
In addition, traditional methods of aggregating information on the low-congestion shortcuts were originally randomized~\cite{GhaffariH16_shortcuts}, but since then were derandomized~\cite{GhaffariH21_shortcuts,HaeuplerIZ21_shortcuts}. Those components are standard, and used in \cite{deterministic_sep}'s deterministic separator algorithm as well.

\subsection{The Algorithm}
\label{sec:the_distributed_algorithm}
 We now give a distributed deterministic implementation of \cref{algorithm: separator}. Recall that  \cref{algorithm: separator} mostly follows Ghaffari-Parter~\cite{GhaffariP17_dfs}, the  lines highlighted in blue in \cref{algorithm: separator} specify which components differ from Ghaffari-Parter.
 For each such component, we describe the intuition behind its implementation and provide a proof.
 Proof ideas are provided also for steps that we borrow from previous work (e.g. \cite{GhaffariP17_dfs} or works on low-congestion shortcuts~\cite{GhaffariH16_shortcuts, GhaffariH21_shortcuts, HaeuplerIZ21_shortcuts}) in Appendix~\ref{appendix: deferred_details_implementation}.
The computational tasks we use are mainly  the standard part-wise aggregation primitive.

It is highly beneficial to describe the algorithm in terms of part-wise aggregations. Then, the algorithm can be applied simultaneously in multiple subgraphs via low-congestion shortcuts. This is often used in the applications, where also the spanning tree $T$ of $G$ might be arbitrary, which (from our perspective) is given as input. Thus, we assume a given partition of $G$ into vertex-disjoint bi-connected subgraphs $\{G_i\}_{i=1}^k$, and for each $G_i$ an arbitrary spanning tree $T_i$, and describe the algorithm for a specific subgraph $G_i$ assuming we have low-congestion shortcuts constructed. This guarantees  the algorithm can  run on all $G_i$ simultaneously at the same round complexity.
If one only wants to compute a separator of $O(D)$ vertices in $G$, then we compute a BFS tree $T$ of depth $O(D)$ of $G$ in $O(D)$ rounds~\cite{peleg-book}, and apply the algorithm on $G$ with $T$ as an input. To simplify notation, in the remainder of this section we  use $G$ and $T$ instead of $G_i$ and $T_i$.

\medskip
\noindent
{\bf Learning the cotree \boldmath$T^*$ (Step~\ref{alg_step: learn_cotree}).}
This step is the same as that of~\cite{GhaffariP17_dfs}. Intuitively, the given spanning tree $T$ of $G$ defines a dual cotree $T^*$ s.t. an edge is in $T^*$ iff it is not in $T$ (\cref{fact: tree_cotree_decomposition}). Since vertices of $G$ know their incident $T$ edges, they also know their incident $T^*$ edges. The correspondence between primal edges and dual edges (IDs of faces that contain them) is known in $\tilde{O}(D)$ rounds by \cref{lemma: know_faces}.
\begin{restatable}[Learn cotree~\cite{GhaffariP17_dfs}]{lemma}{LearnCoTree}
\label{lemma: learn_cotree}
    In deterministic $\tilde{O}(D)$ rounds, each vertex $v\in G$  
    learns for each incident $e$ edge whether it is in $T^*$.
\end{restatable}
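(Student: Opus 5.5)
The plan is to reduce the statement to what each vertex already knows locally, plus one application of \cref{lemma: know_faces}. By \cref{fact: tree_cotree_decomposition}, an edge $e$ of $G$ is in the cotree $T^*$ exactly when $e\notin T$. By the input convention of the $\CONGEST$ model, each vertex $v$ knows, for every incident edge, whether that edge belongs to $T$. So $v$ can immediately mark every incident edge not in $T$ as a $T^*$ edge, with no communication. This already settles the statement as literally phrased, in $0$ rounds.

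For the cotree to be usable in later steps, such as subtree sums in \cref{lemma: dual_subtree_sums}, $v$ must also know the dual endpoints of each such edge, i.e., the IDs of the two faces on either side of $e$. For this I invoke \cref{lemma: know_faces}: in $\tilde{O}(D)$ deterministic rounds, every vertex learns the IDs of the faces containing it and, for each incident edge, the IDs of its dual endpoints. This lemma relies on the combinatorial embedding, which is known locally after the deterministic algorithm of \cite{GhaffariH_embedding16}. Its deterministic version follows from the derandomization remarks above: Boruvka-style merging is made deterministic via Cole--Vishkin, and PA runs on deterministic shortcuts via \cref{lemma: shortcuts} and \cref{lemma: PA_primal}.

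Combining the two steps, each $v$ knows for each incident edge $e$ both whether $e\in T^*$ and which pair of dual nodes $e^*$ connects. The total cost is $\tilde{O}(D)$ deterministic rounds. The only step with real content is the deterministic face labeling, which I take as given from the earlier lemma. When the algorithm runs simultaneously on the parts $G_i$, the same argument applies per part using the shortcut-based PA.
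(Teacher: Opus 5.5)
Your proposal is correct and matches the paper's proof. Both invoke \cref{lemma: know_faces} to obtain face IDs and the dual endpoints of each incident edge, and then use \cref{fact: tree_cotree_decomposition} to read off $T^*$-membership as the complement of the locally known $T$-membership. Your extra remark is also accurate: the membership question itself needs no communication (the paper says the same just before the lemma), and the $\tilde{O}(D)$ cost comes only from learning the dual endpoints.
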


\begin{proof}
    We use \cref{lemma: know_faces} to assign each face $f$ of $G$ an ID. In the output format, for each face $f$, vertices $v$ on $f$ know the ID of $f$. In addition, $v$ learns for each incident edge $e$ the two IDs of the faces containing $e$ (i.e., the endpoints of $e$ in $G^*$).
     Given this information, learning the cotree $T^*$ is immediate since all edges that are not in $T$ are in $T^*$ (\cref{fact: tree_cotree_decomposition}). Thus, each vertex $v\in G$ knows for each incident edge if it is in $T^*$.
\end{proof}

\medskip
\noindent
{\bf Reducing vertex-weights to face-weights (Step~\ref{alg_step: face_weights}).}
This step simplifies upon the approaches of~\cite{GhaffariP17_dfs, deterministic_sep} for computing face-weights and is the core of our algorithm.
To compute the face weights, we compute a summation operator over the faces of $G$, and the result is then the faces' weights as in \cref{observation: vertex_to_face_weight}. The input of a vertex $v$ for a face $f$ that contains it is simply $v$'s weight if $v$ chose to transfer its weight to $f$ and zero otherwise (e.g., $v$ transfers its weight to the minimal ID face that contains it).

\begin{lemma}[Assign face weights]
    \label{lemma: assign_face_weights}
    Given a weight assignment $\wV(\cdot)$ to the vertices of $G$, in deterministic $\tilde{O}(D)$ rounds, each face $f$ of $G$ is assigned a weight $\wF(f)$ equivalent to the sum of weights of vertices that transfer their weight to $f$ (as in \cref{observation: vertex_to_face_weight}). 
    Upon termination, all vertices of a face $f$ learn $\wF(f)$.
\end{lemma}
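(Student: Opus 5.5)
The plan is a direct reduction to \cref{lemma: know_faces} followed by \cref{lemma: aggregates_on_faces} with the SUM operator.

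First, run the procedure of \cref{lemma: know_faces}. After $\tilde{O}(D)$ deterministic rounds, every vertex $v\in G$ knows the IDs of all faces that contain it. Each vertex then chooses locally, with no communication, a single face $f_v$ containing it. To be concrete, $f_v$ is the face of minimum ID among those $v$ learned. This fixes the assignment of \cref{observation: vertex_to_face_weight}, since every vertex commits to exactly one face. Because the face IDs known at $v$ are the same IDs known by the other vertices of those faces, the choice is consistent: the face $f_v$ is a well-defined face of $G$.

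Second, define the inputs for the face aggregation. For each face $f$ containing $v$, set $x_{v,f}=\wV(v)$ if $f=f_v$, and $x_{v,f}=0$ otherwise. Since weights are polynomially bounded integers, every input and every partial sum fits in $O(\log n)$ bits. Now apply \cref{lemma: aggregates_on_faces} with $\bigoplus=$ SUM. In $\tilde{O}(D)$ deterministic rounds, every vertex $u\in f$ learns $\sum_{v\in f}x_{v,f}$, and this equals $\sum_{v:\,f_v=f}\wV(v)=\wF(f)$.

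The only subtle point is a vertex that appears on the boundary walk of a face more than once. If the aggregation of \cref{lemma: aggregates_on_faces} counted each occurrence separately, the weight would be overcounted. Under the bi-connectivity assumption every face is a simple cycle, so this cannot happen. In general, I would have $v$ put its weight on only one of its occurrences, say the one at its minimum-ID incident edge of that face, and input $0$ at the others. This uses only local knowledge of the embedding.

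Determinism and the $\tilde{O}(D)$ bound follow directly: both invoked procedures are deterministic after the derandomization discussed above, and they are run a constant number of times.
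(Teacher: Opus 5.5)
Your proposal is correct and matches the paper's proof. Each vertex learns its incident faces via Lemma~\ref{lemma: know_faces} and picks one face (e.g., the minimum-ID one). It inputs its weight for that face and $0$ for its other faces, and the face weights then follow from the SUM aggregation of Lemma~\ref{lemma: aggregates_on_faces}. Your extra safeguard for vertices occurring several times on a face boundary is harmless but not needed, since Lemma~\ref{lemma: aggregates_on_faces} already takes one input per vertex--face pair.
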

\begin{proof}
We want to assign each node $f$ in $T^*$ a weight as in \cref{observation: vertex_to_face_weight}, that is, 
each vertex $v$ that participates in faces $f_1,f_2,\ldots,f_k$ chooses an arbitrary face $f_i$ to transfer its weight to. Then, the weight $\wF(f)$ of a face $f$ is the total weight of $\wV(v)$ for vertices $v$ that choose to transfer their weight to $f$. 

To compute $\wF(f)$ for all faces $f$ in $G$, we use \cref{lemma: aggregates_on_faces}.
Note, in the input format, each vertex on a face $f$ has an input for $f$.
The input $v_f$ of a vertex $v$ for a face $f$ that contains it is, $\wV(v)$ iff $v$ chose $f$, and zero otherwise.
In order for vertices to be able to choose their inputs, they need to know the faces that contain them, which is done by~\cref{lemma: know_faces}. Hence, each $v\in G$ picks an arbitrary face (e.g. face with the minimal ID) of the faces that contains it to transfer its weight to it. Since now vertices know their inputs, via \cref{lemma: aggregates_on_faces} (with the aggregate operator SUM), we  compute in deterministic $\tilde{O}(D)$ rounds the weight of each face as in \cref{observation: vertex_to_face_weight}. 
Indeed, each vertex contributes to the weight of exactly one arbitrary face that contains it.
Moreover, in the output format of \cref{lemma: aggregates_on_faces}, all vertices on a face $f$ know the face's output $\wF(f)$.
\qedhere
    
\end{proof}

\medskip
\noindent
{\bf Balanced or critical node detection (Step~\ref{alg_step: balanced_critical_node}).}
Given the cotree $T^*$ (computed in Step~\ref{alg_step: learn_cotree}), either a $(\frac{1}{4},\frac{3}{4})$-balanced node or a $(\frac{1}{4},\frac{3}{4})$-critical node exists in $T^*$ (by \cref{lemma: balanced_critical_node}).
In order to implement this step, we mainly use \cref{lemma: dual_subtree_sums} to compute subtree sums on $T^*$, where the input for each node of $T^*$ is its assigned weight from the previous step.
Vertices of $G$ then detect locally a balanced (or critical) node that they participate in. Then, vertices elect some balanced or critical node by aggregating the IDs of these nodes on a BFS tree (or low-congestion shortcuts) of $G$ in deterministic $\tilde{O}(D)$ rounds by~\cite{peleg-book} (resp. by \cref{lemma: PA_primal}). 

\begin{restatable}{lemma}{DetectBalancedCriticalNode}
\label{lemma: balanced_critical_node_detection}
    There is an $\tilde{O}(D)$ deterministic round algorithm that finds a $(\frac{1}{4},\frac{3}{4})$-balanced dual node if exists in $T^*$, and a $(\frac{1}{4},\frac{3}{4})$-critical dual node otherwise.
    Upon termination, all vertices of $G$ learn the ID of the balanced or critical node.
\end{restatable}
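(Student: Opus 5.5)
We will assemble the algorithm from primitives already available: \cref{lemma: learn_cotree}, \cref{lemma: assign_face_weights}, \cref{lemma: dual_subtree_sums}, \cref{lemma: aggregates_on_faces}, and a global aggregation over a BFS tree of $G$ (or \cref{lemma: PA_primal} in the multi-subgraph setting). Each is deterministic and takes $\tilde{O}(D)$ rounds, and we will use only $O(1)$ of them, so the stated round bound follows.

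First, the root. Every vertex learns the IDs of its faces (\cref{lemma: know_faces}). A global MIN-aggregation over the BFS tree then makes one face ID $g$ known to all vertices, and we root $T^*$ at $g$. Next, compute the face weights $\wF(\cdot)$ with \cref{lemma: assign_face_weights}; afterwards every vertex of a face $f$ knows $\wF(f)$. Apply \cref{lemma: dual_subtree_sums} with $x_f=\wF(f)$. Now each vertex $v$ on $f$ knows $\parent(f)$ and $w(T^*_f)$. Finally, obtain the total $w(T^*)=\wF(G)=\wV(G)$ by one global SUM of vertex weights (\cref{observation: vertex_to_face_weight}).

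Each vertex $v$ on face $f$ can now test locally whether $f$ is $(\frac14,\frac34)$-balanced, since it knows $w(T^*_f)$ and $w(T^*)$. Testing whether $f$ is critical needs the children's subtree weights. For every incident edge $e\in T^*$ whose dual edge joins $f$ to a face $h$ with $\parent(h)=f$, the vertex $v$ knows $w(T^*_h)$, because $v$ also lies on $h$. So $v$ computes the indicator ``every child of $f$ reachable through my incident edges has $w(T^*_h)<\frac14 w(T^*)$''. An AND-aggregation over the face $f$ (\cref{lemma: aggregates_on_faces}) combines these indicators, and together with $w(T^*_f)>\frac34 w(T^*)$ all vertices of $f$ learn whether $f$ is critical. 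One detail must be checked here: every child edge of $f$ has an endpoint on $f$, and that endpoint learns both face IDs of the edge by \cref{lemma: know_faces}. Self-loop edges are not in $T^*$ as tree edges, so they do not matter.

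To finish, run a global MIN-aggregation over the BFS tree. Each vertex contributes the ID of a balanced face it lies on, if it lies on one, and $\infty$ otherwise. If the result is finite, everyone learns a balanced node. Otherwise, run a second MIN-aggregation over critical faces; by \cref{lemma: balanced_critical_node} some critical node exists, so this one returns a valid ID. The main obstacle is the criticality test: a face can be arbitrarily long, so it cannot be handled by purely local computation. The face-wise aggregation of \cref{lemma: aggregates_on_faces} resolves this, and what remains is to confirm that child relations are locally visible on the primal edges.
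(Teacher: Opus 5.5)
Your proposal is correct and follows the paper's proof essentially step for step: root $T^*$ at an aggregated face ID, compute face weights and dual subtree sums via Lemma~\ref{lemma: dual_subtree_sums}, make $w(T^*)$ globally known, test each face locally, and elect an ID by a global aggregation; the differences (MIN vs.\ MAX ID, summing vertex weights instead of broadcasting $w(T^*)$ from the root face) are cosmetic. Your face-wise AND via Lemma~\ref{lemma: aggregates_on_faces} for the criticality test spells out a step the paper only asserts (``in the same manner as in the previous step''), and that aggregation is genuinely needed, because the children of $f$ are seen by different vertices spread around the face.
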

\begin{proof}
Given the cotree $T^*$ (computed in Step~\ref{alg_step: learn_cotree}), either a $(\frac{1}{4},\frac{3}{4})$-balanced node or a $(\frac{1}{4},\frac{3}{4})$-critical node exists in $T^*$ (by \cref{lemma: balanced_critical_node}).
We find one of them as follows.

\begin{enumerate}
    \item 
    We want to compute the weight of $w(T^*_f)$ for all nodes $f\in T^*$. To do that, we want to use \cref{lemma: dual_subtree_sums}, but first we must provide its input format:
    (a) From \cref{lemma: assign_face_weights}, each $v\in G$ knows the weight of each face $f$ that contains $v$. In addition, (b) by \cref{lemma: know_faces} and \cref{lemma: learn_cotree}, each vertex $v$ knows for each incident edge $e$ whether it is in $T^*$, and $v$ knows the two IDs of faces that $e$ participates in. Finally, (c) we pick a root for $T^*$ by aggregating the IDs of faces in $G$,
    picking the maximum ID face (node of $T^*$) as the root. This is implemented in $\tilde{O}(D)$ deterministic rounds (\cref{lemma: PA_primal}).

    Henceforth, we match the input format of \cref{lemma: dual_subtree_sums}, so we use it to root $T^*$ and to compute the subtree sums of $T^*$ in deterministic $\tilde{O}(D)$ rounds. 
    Upon termination, all vertices of $v\in G$ on a face $f$, know the weight of $f$'s subtree $w(T^*_f)$.

    \item 
    To detect a balanced node or a critical node, vertices of $G$ on the (face corresponding to the) root node $g$ of $T^*$, broadcast the weight of $g$'s subtree (i.e., $w(T^*)$) to the entire graph $G$, then all vertices of $G$ learn $w(T^*)$.
    This is implemented in $\tilde{O}(D)$ rounds, either via the low-congestion shortcuts (\cref{lemma: shortcuts}) in case we are working in a subgraph $G_i$ of $G$, or via a BFS tree if we work in $G$~\cite{peleg-book}.

    \item
    Next, because each vertex $v\in G$ knows $w(T^*)$ and the subtree weight $w(T^*_f)$ of faces $f$ that $v$ participates in, $v$ locally detects if $f$ is a $(\frac{1}{4},\frac{3}{4})$-balanced node. All vertices aggregate on $G$, in $\tilde{O}(D)$ deterministic rounds (\cref{lemma: PA_primal}), the ID of balanced-nodes $f$ that they detected, choosing the maximum ID face $f$ to be the desired balanced node.

    \item 
    If no balanced node was identified in the previous step, there must be a $(\frac{1}{4},\frac{3}{4})$-critical node.
    Again, in the same manner as in the previous step, the ID of a critical node $f$ is known to vertices that participate in it and is learned by all vertices of $G$.\qedhere
\end{enumerate}
\end{proof}

\medskip
\noindent
{\bf Marking the separator (Step~\ref{alg_step: mark_separator}).}
In this step, we mark the $u$-to-$v$ path separator $P$ of $T$ and inform the vertices $u,v$ that they are the endpoints of the (possibly artificial) edge $e$ that closes a cycle with $P$.
We first detect the vertices $u,v$. Then, we mark the path $P$ by a simple subtree sum computation on $T$: $u,v$ have an input of 1, and other vertices 0. Thus, the edges that have one endpoint that has a sum of 1 and one endpoint that has a sum of at least one \footnote{I..e, all internal nodes of $P$, and only those nodes, have a sum that is exactly one; except for the lowest common ancestor of $u,v$ in $T$ that has a sum of two.}, are the edges of $P$. This terminates in $\tilde{O}(D)$ rounds (Lemmas~\ref{lemma: PA_primal} and~\ref{lemma: root_ancestor_subtree}).

The detection of $u,v$ is done by using a procedure of~\cite{GhaffariP17_dfs} (Appendices B.2 and B.3 of their paper) with different constants. In a high level, they investigate two cases (similar to what we do in \cref{thm: det_separator_sequential}). Namely, the case where a balanced node exists, and the case where a critical node exists, and for each they show how to deduce the vertices $u,v$.
Note, in \cref{lemma: separator_existence} we proved that given a balanced or a  critical node, such a separator exists using our way of assigning weights. The procedure uses simple primitives: subtree sums and previously computed weights (so, it can easily be made deterministic).
More details on the procedure are provided in Appendix~\ref{appendix: deferred_details_marking_sep_path}.

\begin{restatable}[Step S4 of~\cite{GhaffariP17_dfs}]{lemma}{MarkPathSeparator}
\label{lemma: mark_separator}
Assume that all vertices of a bi-connected graph $G$ know the ID of $(\frac{1}{4},\frac{3}{4})$-balanced node $f$ if exists, and otherwise, the ID of a $(\frac{1}{4},\frac{3}{4})$-critical node $f$.
In deterministic $\tilde{O}(D)$ rounds, each vertex $v\in G$ knows its incident edges on $P$ (if any), where $P$ is a $3/4$-balanced path separator of $G$. In addition, $P\cup\{e\}$ is a fundamental cycle $C(T,e)$ after adding the edge $e=(u,v)$, where $u,v$ are the endpoints of $P$. If $e\notin G$, then $u,v$ learn the ID of each other, and a local ordering (embedding) of $e$ that preserves planarity. 
\end{restatable}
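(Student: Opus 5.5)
The plan is to split according to which kind of node was found (everyone knows which, and its ID, by \cref{lemma: balanced_critical_node_detection}). First we identify the endpoints $u,v$ locally. Then we mark the $u$-to-$v$ path in $T$ using the subtree-sum trick described before the lemma, which takes $\tilde{O}(D)$ rounds by \cref{lemma: root_ancestor_subtree} and \cref{lemma: PA_primal}. Correctness of the output (that $P$ is $3/4$-balanced and $P\cup\{e\}=C(T,e)$) comes directly from \cref{lemma: separator_existence}. So the whole task is to compute, with aggregations only, the same $u,v$ that the proof of that lemma picks.

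\textbf{Balanced node $f$.} The edge $e$ is the primal edge dual to $(f,\parent(f))$. By \cref{lemma: dual_subtree_sums}, every vertex on $f$ knows the ID of $\parent(f)$. By \cref{lemma: know_faces}, each vertex knows the two face IDs of each of its incident edges. Hence the endpoints of the unique edge whose faces are $\{f,\parent(f)\}$ and which lies in $T^*$ recognize themselves locally as $u,v$. Here $e\in G$, so nothing else is needed.

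\textbf{Critical node $f$ with no children.} We take $e=(f,\parent(f))$ exactly as above; \cref{lemma: separator_existence} shows this is correct.

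\textbf{Critical node $f$ with children.} By bi-connectivity, $f$ is a simple cycle $v_1,\dots,v_k$ with $e_0=(v_1,v_k)$ dual to $(f,\parent(f))$. The task is to simulate the fan triangulation from $v_k$ and find the index $j$ of \cref{claim:fi}. We proceed as follows.
\begin{enumerate}
\item Number the vertices of $f$ along the cycle starting from $v_1$. This is a prefix-sum along the path $f\setminus e_0$. Since faces are parts of $\hat G$, we use the face aggregation machinery of \cref{lemma: aggregates_on_faces} (a path is a tree, so rooting it at $v_1$ and applying \cref{lemma: root_ancestor_subtree} gives positions and orientation).
\item Each vertex $v_i$ assigns to $f_i$ the weight of its own transferred share, plus the subtree weight $w(T^*_g)$ of the real child $g$ across edge $(v_i,v_{i+1})$. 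It knows this weight because $v_i$ lies on $g$.
\item The subtree weight of $f_i$ in the augmented tree is then a suffix sum over indices $\ge i$. A second prefix/suffix-sum computation gives every $v_i$ the value $w(T^*_{f_i})$.
\item Each vertex checks locally whether it is the lowest index with $w(T^*_{f_j})>\frac34 w(T^*)$, and a max/min aggregation selects $j$. The artificial edge separating $f_j$ from $f_{j+1}$ is $(v_{j+1},v_k)$, so $u=v_{j+1}$ and $v=v_k$.
\item $u$ and $v$ exchange IDs via an aggregation on $f$. The planar local ordering of $e$ at each of them is determined by the position of $e$ inside face $f$ between its two face edges, which each endpoint knows from the embedding.
\end{enumerate}

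The main obstacle is the weight accounting in the third case. In the proof, a vertex $v$ that originally sent its weight to $f$ is re-assigned to some triangle $f_i$ containing it. We must fix this choice concretely, for example by sending $v_i$'s weight to $f_i$ (with $v_k$ and $v_{k-1}$ assigned to $f_{k-2}$), and then recheck that \cref{claim: proper_face_weights} still holds. It does: each $f_i$ receives at most three vertex weights, each at most $\frac1{12}w$, so each $f_i$ is $\frac14$-proper.

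We also need the linear prefix sums on a possibly long face cycle to finish in $\tilde O(D)$ rounds. I expect this to follow from the shortcut-based face aggregation used in \cref{lemma: dual_subtree_sums}, treating the face path as a tree in $\hat G$ as in~\cite{GhaffariP17_dfs}.
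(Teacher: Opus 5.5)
Your proposal is correct. It agrees with the paper in the balanced case and in the childless critical case, but it takes a different route when the critical node $f$ has children.

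\textbf{The paper's route.} The paper proves this lemma by deferring to Step S4 of Ghaffari--Parter. That procedure runs a binary search over the fan edges $(v_i,v_k)$. Each phase tests a single candidate artificial edge: it sums the subtree weights of the children of $f$ on one side, and adds the weight of the new face $f_{in}$, obtained by a subtree sum on the $v_i$-to-$v_k$ path.

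\textbf{Your route.} You fix the re-assignment inside the fan once ($v_i\mapsto f_i$, and $v_{k-1},v_k\mapsto f_{k-2}$). You then observe that $w(T^*_{f_i})=\sum_{i'\ge i}x_{v_{i'}}$, where $x_{v_{i'}}$ is the share of $v_{i'}$ plus the subtree weight of the child across $(v_{i'},v_{i'+1})$. One suffix sum therefore evaluates every candidate at once, and the unique threshold crossing is exactly the $f_j$ of \cref{claim:fi}, giving $u=v_{j+1}$ and $v=v_k$.

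\textbf{What each buys.} Your version is self-contained and matches the proof of \cref{lemma: separator_existence} literally. It therefore needs no separate remark about how the endpoints of a tested edge split their weight between $f_{in}$ and $f_{out}$. It also replaces one subtree-sum computation per binary-search phase by a constant number of them. It relies on exactly the feature that distinguishes the paper's weight scheme: each vertex's input is independent of the tested cycle $S$. In Ghaffari--Parter, a vertex's input depended on whether it appears in another face enclosed by $S$, which forces a per-candidate computation. The paper's route is shorter to write because it reuses an existing procedure, but it leaves the details to that paper.

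\textbf{Two small remarks.} First, the detour through $\hat G$ is unnecessary. Since $G$ is bi-connected, $f\setminus e_0$ is a simple path of $G$. Hence \cref{lemma: root_ancestor_subtree} together with \cref{lemma: PA_primal} directly gives the rooting at $v_1$ and the suffix sums in $\tilde O(D)$ rounds, which settles the point you left as an expectation. Second, your separate treatment of a childless critical node is harmless but not needed. The argument of \cref{claim:fi} never uses that $f$ has children, so the fan computation also covers that case.
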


\medskip
\noindent
{\bf Removing the bi-connectivity assumption (concluding \cref{thm: det_separator}).}
We remove the bi-connectivity assumption as in~\cite{LiP19_diameter}.
We mention that~\cite{LiP19_diameter}'s  procedure was originally randomized for using the connectivity algorithms of~\cite{GhaffariP17_dfs, GhaffariH16_shortcuts}. Hence, it is easily derandomized as in \cref{sec: distributed_tools}.

\begin{lemma}[Appendix B.1 of~\cite{LiP19_diameter}]
\label{lemma: biconnectivity_assumption_removal}
    Given a planar graph $G$, there is a set of artificial edges $E'$, such that, $G'=G\cup E'$ is a bi-connected planar graph.
    $G'$ is computed in deterministic $\tilde{O}(1)$ part-wise aggregation tasks.
    Moreover, any round of communication in $G'$ is simulated deterministically within $O(1)$ rounds in $G$. 
\end{lemma}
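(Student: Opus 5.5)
The plan is to give an explicit, purely local augmentation: \emph{close every corner}. For each vertex $x$ with clockwise-ordered incident edges $(x,u_1),\ldots,(x,u_d)$ given by the combinatorial embedding, add the artificial edge $(u_i,u_{i+1})$ for every $i$ (indices mod $d$), and discard self-loops and duplicates. Each such edge is drawn inside the face containing the corner $u_i,x,u_{i+1}$, running parallel and very close to the two-edge walk $u_i\to x\to u_{i+1}$. I would argue the three claims in the following order: planarity, bi-connectivity, and then computation and simulation.

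\textbf{Planarity.} Fix a face $g$ and its boundary walk. Every artificial edge placed in $g$ shortcuts one corner of this walk, so its drawing is a curve hugging two consecutive boundary edges. I would argue that such ``offset'' curves of the boundary walk can be nested so that they never cross. One way is to draw each of them inside a thin collar of $\partial g$. Another is to observe that the added edges in $g$ form a cyclic ``second-neighbour'' walk that lies strictly inside $g$. Either way, the new rotation systems are consistent, so $G'=G\cup E'$ is planar.

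\textbf{Bi-connectivity.} Remove an arbitrary vertex $x$. Every connected component of $G-x$ contains some neighbour $u_i$ of $x$, because $G$ is connected. The artificial edges $(u_1,u_2),\ldots,(u_{d-1},u_d)$ avoid $x$ and join all neighbours of $x$ into one connected piece. Hence $G'-x$ is connected. This covers degree-one vertices as well, provided $G$ has at least three vertices. The only degenerate case is a single edge, which is handled trivially. I would also check that removing duplicate edges and self-loops breaks none of this, since only true self-loops $u_i=u_{i+1}$, arising at degree-one vertices, are dropped.

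\textbf{Computation and simulation.} Each vertex $x$ computes its corners locally from its rotation and informs $u_i$ and $u_{i+1}$ of each other's IDs, together with the position of the new edge in their rotations (immediately next to the edge towards $x$). This takes $O(1)$ rounds, and $O(1)$ rounds is trivially $\tilde O(1)$ part-wise aggregation tasks. Duplicate detection is also local. A message sent over an artificial edge $(u_i,u_{i+1})$ is routed through $x$ in two hops. An original edge $(x,a)$ lies on at most two corners at $x$ and two corners at $a$, so it carries at most four artificial edges. Hence one round in $G'$ is simulated in $O(1)$ rounds of $G$.

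The step I expect to be the main obstacle is the planarity argument: making precise that several corner shortcuts inside the same face, including faces whose boundary walks revisit a vertex as at cut vertices, can be drawn simultaneously without crossing. I would handle it by drawing all of them in the thin collar of $\partial g$ and ordering them along the boundary walk.
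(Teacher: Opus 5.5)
Your construction fails at the step you flagged. Closing every corner does not preserve planarity, and no choice of drawing rescues it.

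Take $G=C_5$ on $w_0,\dots,w_4$. Every vertex has degree $2$, and its corners contribute $(w_{i-1},w_{i+1})$. So $E'$ is the set of all five chords and $G'=K_5$. Attaching a pendant vertex to $w_0$ gives a non-bi-connected input whose $G'$ still contains $K_5$.

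The collar argument already breaks inside a single face. Suppose the face boundary is a simple cycle $w_0w_1\cdots w_{k-1}$ with $k\ge 4$. The shortcuts $(w_{j-1},w_{j+1})$ and $(w_j,w_{j+2})$, from the consecutive corners at $w_j$ and $w_{j+1}$, have interleaving endpoints along the boundary. So they must cross if both are drawn in that face. Concretely, the first shortcut together with the path $w_{j-1}w_jw_{j+1}$ encloses the only sector at $w_j$ from which the second can start. The ``cyclic second-neighbour walk'' you describe is therefore a self-crossing star polygon, not a curve lying inside $g$.

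The missing idea is that artificial edges may be added only where bi-connectivity actually fails, and sparingly even there. Restricting to corners at cut vertices whose two edges lie in different blocks is still not enough. Consider a triangle $w_0w_1w_2$ with a pendant $p_i$ drawn outside at each $w_i$. Both corners flanking $p_i$ qualify, so each $p_i$ becomes adjacent to all of $w_0,w_1,w_2$, and $G'\supseteq K_{3,3}$.

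A workable rule has to be much more economical. For instance, it could close around each cut vertex only as many block transitions as are needed to link its blocks. It must then come with a genuine non-crossing argument, which may need to route a new edge away from the corner it came from. For the path $a\,b\,c\,d$, the shortcuts $(a,c)$ and $(b,d)$ cannot both hug corners on the same side.

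Deciding which corners to close depends on global structure, e.g.\ cut vertices and blocks, or which vertices repeat on a face boundary. This is consistent with the lemma budgeting $\tilde O(1)$ part-wise aggregation tasks rather than $O(1)$ local rounds. The paper itself does not prove the lemma; it imports it from Li--Parter.

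Two parts of your proposal survive for any such subset of corner shortcuts:
\begin{itemize}
\item The simulation bound: each artificial edge is routed through the common neighbour with dilation $2$ and congestion at most $4$.
\item The connectivity argument for $G'-x$, provided you add that neighbours of $x$ lying in a common block are already connected in $G-x$.
\end{itemize}
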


We use the above to conclude the proof of \cref{thm: det_separator}. That is, we compute $G'$ and run the algorithm on $G'$. Note that $T$ is indeed a spanning tree of $G'$ (as the algorithm expects). However, $T$'s edges are edges of $G$ (not artificial), and since the separator $P$ is a path of $T$, it is also a path of $G$. Obviously, $P$ is a balanced separator also for $G$ because there is no change in the vertex-weights. Hence:

\DistributedDeterministicSeparator*

\noindent
{\bf Generalization to multiple subgraphs.}
Note, given a partition  $\{G_i\}_{i=1}^{k}$ of $G$ into vertex-disjoint connected subgraphs, with spanning trees  $\{T_i\}_{i=1}^{k}$, where $T_i$ is an arbitrary spanning tree of $G_i$,  we have described the algorithm for computing a separator of a single subgraph $G_i$ in deterministic $\tilde{O}(D)$ rounds.  It easily extends to run on all subgraphs $G_i$ simultaneously in the same round complexity.
This follows from standard use of low-congestion shortcuts~\cite{GhaffariP17_dfs, GhaffariZ22_mincut,RozhonGHZL22_shortestpaths}. A proof is provided in Appendix~\ref{appendix: details_genrelaization_to_subgraphs}  

\begin{restatable}{theorem}{SpearatorMultipleSubgraphs}
\label{lemma: spearator_multiple_subgraphs}
Let $\{G_i\}_{i=1}^{k}$ be a partition of $G$ into vertex-disjoint connected subgraphs, and $\{T_i\}_{i=1}^{k}$ be spanning trees of $G_i$s. In addition let $\{w_i(\cdot)\}_{i=1}^{k}$ be $1/12$-proper weight assignments to $G_i$s' vertices.
Then, there is a deterministic $\tilde{O}(D)$-round  distributed algorithm that for each $G_i$ finds a $3/4$-balanced path separator $P_i$ w.r.t. $w_i(\cdot)$ of $O(\depth(T_i))$ vertices. In addition, $P_i\cup\{e\}$ is  a fundamental cycle of $T_i$, where $e=(u,v)$ and $u,v$ are the endpoints of $P$. In case $e\notin G_i$, then $u,v$ learn the ID of each other, and a local ordering (embedding) of $e$ that preserves planarity.
\end{restatable}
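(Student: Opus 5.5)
The plan is to run the single-graph algorithm of \cref{thm: det_separator} on all parts $G_i$ at once. Each step of that algorithm, as implemented in \cref{section: distributed_implementation}, is built from a constant (or $\tilde{O}(1)$) number of part-wise aggregations and of the dual-graph procedures (\cref{lemma: know_faces}, \cref{lemma: aggregates_on_faces}, \cref{lemma: dual_subtree_sums}), which are themselves reductions to part-wise aggregation. The proof therefore has three ingredients: (i) a partition into connected parts in which every aggregation of the single-graph algorithm lives; (ii) a way to route all these aggregations simultaneously in $\tilde{O}(D)$ rounds, where $D$ is the diameter of $G$ and not of the $G_i$; (iii) a check that correctness is per-part, so that \cref{thm: det_separator_sequential} applies to each $G_i$ with its own $w_i$ and $T_i$. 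The size bound $O(\depth(T_i))$ is immediate, since $P_i$ is a path in $T_i$.

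First I would handle bi-connectivity. Apply \cref{lemma: biconnectivity_assumption_removal} to every $G_i$ in parallel; its $\tilde{O}(1)$ part-wise aggregation tasks are over the parts $G_i$. This yields bi-connected planar $G_i'\supseteq G_i$ with $T_i$ still spanning. The artificial edges lie inside the $G_i$ and are simulated with $O(1)$ overhead, so $\bigcup_i G_i'$ is still simulated on $G$. Next, compute low-congestion shortcuts for $\{G_i\}$ via \cref{lemma: shortcuts}, with $c,d=\tilde{O}(D)$, so that any part-wise aggregation over $\{G_i\}$ costs $\tilde{O}(D)$ by \cref{lemma: PA_primal}. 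Then go through Steps~\ref{alg_step: learn_cotree}--\ref{alg_step: mark_separator} and check that each one is a PA over the parts $G_i$, or over a derived partition. The dual steps use faces of each $G_i'$ separately. Face aggregation and dual subtree sums reduce to aggregations on parts that are unions of boundary walks of faces of $G_i'$, together with the auxiliary graph $\hat G$ of \cite{GhaffariP17_dfs}. Every such part is a connected subgraph contained in a single $G_i$, so these parts refine $\{G_i\}$. Broadcasting $w(T_i^*)$, electing a balanced or critical node, and subtree sums on $T_i$ (\cref{lemma: root_ancestor_subtree}) are PAs over $\{G_i\}$ directly. Local decisions use only the values of the vertex's own part.

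The main obstacle is ingredient (ii) for the refined partitions coming from dual computations. A refinement of a partition into connected parts is again a partition into connected parts of the planar graph $G$. So I would invoke \cref{lemma: shortcuts} and \cref{lemma: PA_primal} afresh for each refined partition, at $\tilde{O}(D)$ cost each. This needs the shortcut guarantee to depend only on $G$'s diameter and planarity, which it does. A second subtlety is that $\hat G$ must be simulable on $G$ with constant overhead even when the $G_i$ share no edges; this holds because the simulation of \cite{GhaffariP17_dfs} is local to edges and vertex rotations. Finally, the election and root choice use maximum IDs within each part, so decisions never mix parts. Summing $\tilde{O}(1)$ steps of $\tilde{O}(D)$ rounds each gives the bound. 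Correctness per part, including the planar embedding of a possibly artificial $e$ that $u,v$ learn, follows from \cref{lemma: separator_existence} and \cref{lemma: mark_separator} applied to $G_i'$.
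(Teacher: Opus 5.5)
Your overall plan is the paper's. You run every step of \cref{algorithm: separator} on all parts at once, implement the primal steps as part-wise aggregations over $\{G_i\}$ or refinements of it using planar shortcuts, route the dual steps through the auxiliary graph $\hat G$ of \cite{GhaffariP17_dfs}, and get correctness part by part. Making each $G_i$ bi-connected with \cref{lemma: biconnectivity_assumption_removal} is a sensible detail the paper leaves implicit.

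The step that fails as you justify it is the dual one. The parts used there are single faces in \cref{lemma: aggregates_on_faces} and connected sets of faces (subtrees of $T_i^*$) in the procedures behind \cref{lemma: dual_subtree_sums}. These are not vertex-disjoint in $G$. A vertex of degree $d$ in $G_i'$ lies on $d$ faces of $G_i'$. Two faces on either side of an edge $(x,y)$ that lie in different parts both contain $x$ and $y$. So these parts do not refine $\{G_i\}$, they do not form a partition of $G$, and you cannot simply re-invoke \cref{lemma: shortcuts} and \cref{lemma: PA_primal} on them. A single vertex may belong to arbitrarily many parts, which is exactly the congestion the shortcut definition rules out. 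This overlap is precisely the obstacle the paper singles out for the dual computations.

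The repair uses a property of $\hat G$ that you never invoke. $\hat G$ splits each vertex into one copy per incident face (corner). As a result, the face-parts of each $G_i'$ become vertex-disjoint connected subgraphs of $\hat G$, while $\hat G$ stays planar with diameter $O(D)$; this is how the face shortcuts of Lemma 5 of \cite{GhaffariP17_dfs} are obtained. Hence planar shortcuts with $\tilde O(D)$ parameters exist in $\hat G$ for these parts, and constant-overhead simulation carries each aggregation back to $G$.

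In the multi-part setting, the host graph should be the corner-splittings of the $G_i'$ together with $G$'s edges between different parts, not the disjoint union of the $\hat{G_i}$. Otherwise its diameter is governed by the diameters of the $G_i$ rather than by $D$.

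Constant-overhead simulation, the only property of $\hat G$ you use, is necessary but is not what supplies the shortcuts. With this change, your argument goes through as in the paper.
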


\section{Applications}\label{sec:applications}
Computing a balanced separator lies at the core of state-of-the-art distributed algorithms for many classical optimization problems in planar graphs. In most cases, the separator algorithm is their {\em only} randomized component.  
In particular, it was shown  in~\cite{deterministic_sep} that the (near-optimal) randomized algorithm of~\cite{GhaffariP17_dfs} for computing a DFS tree can be derandomized by replacing the randomized separator algorithm with the deterministic one of~\cite{deterministic_sep}. 
However, it was not mentioned in~\cite{deterministic_sep}, that 
a deterministic separator algorithm implies  
 a derandomization to a collection of other state-of-the-art algorithms that use the {\em Bounded Diameter Decomposition} (BDD) as their only 
 randomized component\footnote{Many of those algorithms use low-congestion shortcuts, that were originally randomized, but as mentioned in \cref{section: distributed_implementation} were since then derandomized.}. Recall from \cref{sec:overview} that the BDD is a recursive decomposition of planar graphs using separators, s.t. all subgraphs in the decomposition have a low $\tilde{O}(D)$ diameter.
The implementation of the BDD~\cite{LiP19_diameter} assumes a black-box fundamental cycle separator 
algorithm that runs in $\tilde{O}(D)$ rounds and applies it recursively. In case there is no such separator, it is sufficient that the algorithm  provides a path separator and (the endpoints and embedding) of a virtual edge, that if added, closes a cycle with the path separator. 
Replacing the black-box randomized separator algorithm in the BDD construction with ours (or with that of~\cite{deterministic_sep}) immediately derandomizes the BDD and hence the following state-of-the-art algorithms for classical problems in {\em directed} planar graphs:  

\begin{itemize} 
\item Computing $\tilde{O}(D)$-bit distance labels to every vertex (in a planar graph with positive and negative weights) in $\tilde{O}(D^2)$ rounds~\cite{LiP19_diameter} (s.t. the distance between any two vertices can be deduced by their labels alone), hence also single-source shortest-paths in $\tilde{O}(D^2)$ rounds.
\item Computing $\tilde{O}(D)$-bit reachability labels to every vertex in $\tilde{O}(D)$ rounds~\cite{Parter20_reachability} (s.t. the reachability between any two vertices can be deduced by their labels alone), hence also single-source reachability in $\tilde{O}(D)$ rounds.
\item Detecting strongly connected components in $\tilde{O}(D)$ rounds~\cite{Parter20_reachability}.
\item Maximum $st$-flow in $\tilde{O}(D^2)$ rounds~\cite{AbdElhaleemDPW25_flow}.
\item Global directed minimum cut in $\tilde{O}(D^2)$ rounds~\cite{AbdElhaleemDPW25_flow}.
\item Minimum weight cycle in $\tilde{O}(D^2)$ rounds~\cite{Parter20_reachability}.
\end{itemize}

\bibliographystyle{plain}
\bibliography{main}

\appendix 

\section{Deferred Details and Proofs  from \cref{sec:the_distributed_algorithm}}
\label{appendix: deferred_details_implementation}

\subsection{Missing Details from Step~\ref{alg_step: mark_separator} (Marking the Separator Path)}
\label{appendix: deferred_details_marking_sep_path}
In this step, we aim to mark the $u$-to-$v$ path separator $P$ of $T$ and inform the vertices $u,v$ that they are the endpoints of the (possibly artificial) edge $e$ which closes a cycle with $P$.
We mark a path separator in the case of a balanced node exactly as \cite{GhaffariP17_dfs} do: After detecting a balanced node $f$, we learn the primal edge $e=(u,v)$ corresponding to the edge $(f,\parent(f))$ of $T^*$ which defines the separator path $P$ as the $u$-to-$v$ path in $T$ (\cref{lemma: separator_existence}). Then, we mark $P$ by a subtree sum procedure in $T$ within $\tilde{O}(D)$ deterministic rounds (Lemmas~\ref{lemma: PA_primal} and~\ref{lemma: root_ancestor_subtree}), where $u,v$ have an input of 1, and other vertices 0. Thus, the edges that have one endpoint which got a sum of 1 are the edges of $P$.
If a critical node was found, this also reduces to the earlier case where we detect two vertices $u,v$ and mark the $u$-to-$v$ path $P$ in~$T$. 

However, detecting $u,v$ in this case is harder (as demonstrated in \cref{lemma: separator_existence}). 
Since the procedure we use is identical to that of~\cite{GhaffariP17_dfs}, we refer the reader to their paper for more details. However,
we highlight why \cite{GhaffariP17_dfs}'s procedure for marking the separator works also for our assignment in case of a critical node:
During their procedure, they perform a binary search for finding the artificial edge $e=(u,v)$ (that if added to the interior of $f$, closes a fundamental cycle balanced separator). The binary search in each phase examines a single (artificial) triangulation edge $e$, and computes the total weight of faces enclosed by the cycle $S=C(T,e)$, checking if it is balanced or not.

Note however, adding $e$ to $f$ creates two faces, each consists of a (distinct) path of $f$ and the edge $e$. One of the new faces is internal to the cycle $S$, denoted $f_{in}$ (the other face $f_{out}$ is external to $S$).
Computing the total weight of faces in $S$'s interior, reduces to summing up the weights (previously computed) of faces that are enclosed by $S$, with one exception, $f_{in}$. That is, all those faces are (real) faces of $G$, except that $S$ contains also the artificial face $f_{in}$, whose weight was not computed before because it simply did not exist until this stage where the artificial edge $e$ is examined. 
Nevertheless,~\cite{GhaffariP17_dfs} find a simple remedy for this issue, the weight of $f_{in}$ can be simply computed by performing a subtree sum on the path  $f_{in}$ (which they denote $(x_{i,1},\ldots,x_{i,j})$\footnote{In our terminology, this path is a $v_i$-to-$v_{k}$ path for some $i$, in Case 2 in the proof of \cref{lemma: separator_existence}.}) of $G$, where the input of a vertex $v$ (in their case) is a unit-weight ($v$'s original weight) if $v$ does not appear in another face that is contained in $S$, otherwise, $v$ is assigned an input of zero. This is done to not overcount $v$.
In our case, the input of a vertex $v$ is simply the weight that $v$ (originally) chose to contribute to $f$ (either $v$'s original weight or zero), that is, $v$ does not even have to check if it appears in another face enclosed by $S$ or not, because our assignment of faces does not suffer from overcounting.
We note that there is a very subtle point, where the endpoints of the artificial edge $e$, that previously chose to transfer their weights to $f$, can now choose to transfer their weight either to $f_{in}$ or to $f_{out}$, either choice is fine (as shown in \cref{lemma: separator_existence}).

\subsection{Proof of \cref{lemma: spearator_multiple_subgraphs} (Generalization to Multiple Subgraphs)}
\label{appendix: details_genrelaization_to_subgraphs}  

\SpearatorMultipleSubgraphs*
 
First, we give a brief overview why this generalization is possible.
The first issue that needs to be resolved is that the algorithm needs to compute part-wise aggregations inside each $G_i$. I.e., there is a given partition in each $G_i$ that we want to compute part-wise aggregations with respect to, and in all $G_i$s simultaneously. However, is not hard to see that a collection of partitions of $G_i$s imply a partition of $G$. Thus, we can use~\cref{lemma: PA_primal} to compute aggregate operators. The second issue is that we cannot perform computations in the dual graph $G^*$ straightforwardly because it is not the communication graph. However, a simple trick of~\cite{GhaffariP17_dfs} allows us to construct shortcuts for the faces of each $G_i$, which are the nodes of $G^*_i$. This allows to compute aggregate operators over faces of $G_i$, and to compute subtree sums on dual trees $T^*_i$. 

\begin{proof}
We concretely describe how one can generalize \cref{thm: det_separator} (the implementation of \cref{algorithm: separator}) to work in all subgraphs $G_i$, simultaneously, deterministically, and in $\tilde{O}(D)$ rounds. While phrasing the algorithm in terms of part-wise aggregations gets us half way through, we still have to clarify some details. We break this into two cases, the first concerns computations on primal graphs, and the second on dual graphs. 

\medskip
\noindent
{\bf  Computations on $G_i$s.}
This case is easy, we simply use the fact that with low-congestion shortcuts, we can implement an aggregation based algorithm on all $G_i$ simultaneously, within the same round complexity of $\tilde{O}(D)$ rounds (\cref{lemma: PA_primal}). However, the issue is that we actually want to compute part-wise aggregations inside each $G_i$, that is, for each $G_i$, we are given a partition  $G_{i,1}, G_{i,2},\ldots,G_{i,k_i}$ into  vertex-disjoint connected subgraphs, and we want to solve a part-wise aggregation task over the partitions $\{G_{i,j}\}_{j=1}^{k_i}$ of all graphs $G_i$ simultaneously. 
Nevertheless, this issue is well studied in the literature (e.g.~\cite{GhaffariP17_dfs, RozhonGHZL22_shortestpaths, GhaffariZ22_mincut}). In particular, one can think of the collection of second layer partitions as one partition $\{G_{i,j}: i\in [k], j\in[k_i]\}$ of $G$ and straightforwardly construct shortcuts on it.
This results with the desired $\tilde{O}(D)$ rounds to compute an aggregate operator in every $G_{i,j}$ (for all $i,j$).
This is possible because each vertex $v$ of $G$ knows in which $G_i$ it is contained, and $v$ knows in which part $G_{i,j}$ it is contained. Thus, the partition is (distributively) known to $G$'s vertices, so low-congestion shortcuts can be constructed and part-wise aggregations can be computed (as in \cref{lemma: PA_primal}). 
Note, it may be the case that different vertices in different parts are in different stages of the algorithm, the issue in this is that some vertices may want to change the partition and construct shortcuts again, while other vertices are not in that stage yet.  However, this can be easily solved by scheduling the update of partitions and reconstruction of shortcuts to start at specific rounds that are given by the algorithm. E.g. divide the time into intervals of $t=D \log^c n$ (for some constant $c$ of the algorithm) rounds, such that the partitions and shortcuts begin to reconstruct at the beginning of each interval. Then, aggregations are computed in the remaining time in the interval.

\medskip
\noindent
{\bf  Computations on $G^*_i$s.} Computations on the dual graphs $G_i^*$ (or their spanning trees $T^*_i$) is an issue because it is not clear how to  simulate a dual graph distributively. Indeed, this topic was tackled for the first time in~\cite{GhaffariP17_dfs} who provided a method for simulating aggregates on faces of $G_i$ and on dual spanning trees. 

We face this challenge exactly as~\cite{GhaffariP17_dfs}.
Namely, all computations we perform on the dual graphs $G^*_i$, especially the algorithms we use in Lemmas~\ref{lemma: know_faces},~\ref{lemma: aggregates_on_faces}, and~\ref{lemma: dual_subtree_sums}, boil down to aggregations on low-congestion shortcuts for the faces of $G_i$s. The technical challenge that~\cite{GhaffariP17_dfs} overcome in this natural approach, is that the set of faces of a graph is not vertex-disjoint. Thus, it does not meet the standard definition of low-congestion shortcuts (\cref{def: shortcuts}).
However, they show how to construct an auxiliary graph $\hat{G_i}$ deterministically in a constant number of rounds, such that, constructing low-congestion shortcuts for a specific partition of $\hat{G_i}$ is done in $\tilde{O}(D)$ rounds (the same complexity for constructing shortcuts inside $G$). This implies low-congestion shortcuts for the faces of $G_i$ with the same near-optimal parameters that planar graphs admit (Lemma 5 of~\cite{GhaffariP17_dfs}).
Our tasks then reduce to aggregate information on shortcuts in the primal graph, and this is done as discussed above for the primal. 

\medskip
\noindent
{\bf  Conclusion.}
In each given time, we have a partition of $G$ into $G_i$s, this partition implies two shortcut sets that are built on $G$ (depending on the steps of the algorithm), such that the first is used to compute part-wise aggregation tasks on $G_i$s simultaneously and the second one is for computations on the dual graphs $G^*_i$s (specifically, their spanning trees $T^*_i$s). 
Those shortcuts are constructed deterministically with near-optimal parameters $c,d=\tilde{O}(D)$ as in \cref{lemma: shortcuts}. In the algorithm, we move between them, such that, aggregations at any given time are computed in at most one of them. After which, the partitions and their shortcuts may be recomputed (according to the steps of the algorithm).
Thus, our algorithm  is applied in all subgraphs $\{G_i\}_{i=1}^k$ simultaneously deterministically within the same round complexity  $\tilde{O}(D)$ of applying it on a single graph $G_i$. \qedhere

\end{proof}

\end{document}